\documentclass[10pt,twocolumn,twoside]{IEEEtran}
\usepackage{cite,url}
\usepackage{graphicx,subfigure,float,color,bigints,mathrsfs}
\usepackage[ruled, vlined]{algorithm2e}
\usepackage{comment}
%\usetikzlibrary{...}
\usepackage{standalone,enumerate}
\usepackage{amsmath,amssymb,textcomp}
\usepackage{tikz}
\usetikzlibrary{graphs,arrows,decorations.pathreplacing,intersections,positioning,calc}
\usetikzlibrary{datavisualization.formats.functions,datavisualization}
\usepackage{pgfplots}	
\usepackage{tikzscale}
\usepackage[noend]{algpseudocode}
\usepackage{dsfont,bbm}
\usepackage[active]{srcltx}
\usepackage[T1]{fontenc}
\usepackage{authblk}
\usepackage{kantlipsum}
\allowdisplaybreaks
\DeclareMathOperator*{\argmax}{arg\,max}

%\DeclareMathOperator*{\min}{min}

%--------------------Theorem Environment----------------------
\newcommand{\BlackBox}{\rule{1.5ex}{1.5ex}}  % end of proof
\newenvironment{proof}{\par\noindent{\bf Proof\
}}{\hfill\BlackBox\\[2mm]}
\newtheorem{theorem}{\bf{Theorem}}
\newtheorem{lemma}{\bf{Lemma}}

\newtheorem{remark}{\bf{Remark}}

\newcommand{\be}{\begin{equation}}
\newcommand{\ee}{\end{equation}}
\newcommand{\bea}{\begin{eqnarray*}}
\newcommand{\eea}{\end{eqnarray*}}

\makeatletter
\def\Lddots{\mathinner{\mkern1mu\raise17\p@\vbox{\kern17\p@\hbox{.}}\mkern2mu
    \raise8\p@\hbox{.}\mkern2mu\raise\p@\hbox{.}\mkern1mu}}
 \makeatother

\outer\def\subsect#1\par{\vskip12pt
plus.07\vsize\penalty-250\vskip0pt plus-.07\vsize
\bigskip\vskip\parskip\message{#1}
\vbox{\smash{\lower9pt\hbox{\kern-8pt\epsfbox{shadedbox.eps}}}}\vskip-\baselineskip
\leftline{\large\bf#1}\nobreak\medskip}

\def\BibTeX{{\rmfamily B\kern-.05em{\scshape i\kern-.025em b}\kern-.08em \TeX}}

%----------------------------MATH------------------------------

\newcommand{\A}{\mathcal{A}}

\newcommand{\C}{\mathscr{C}}
\newcommand{\D}{\mathcal{D}}

\newcommand{\N}{\mathcal{N}}

\newcommand{\bS}{\mathbf{S}}

\newcommand{\bs}{\mathbf{s}}

\newcommand{\btau}{\mbox{\boldmath $\tau$}}
\newcommand{\bphi}{\mbox{\boldmath $\phi$}}
\newcommand{\bomega}{\mbox{\boldmath $\omega$}}
\newcommand{\bdelta}{\mbox{\boldmath $\delta$}}

\newcommand{\bPi}{\mbox{\boldmath $\Pi$}}

\makeatletter
\newcommand*{\coloneq}{\mathrel{\rlap{%
                     \raisebox{0.3ex}{$\m@th\cdot$}}%
                     \raisebox{-0.3ex}{$\m@th\cdot$}}%
                     =}
\makeatother
\begin{document}
\newcounter{mytempeqncnt}
\title{Utility Maximizing Sequential Sensing\\ Over a Finite Horizon}
\author{Lorenzo Ferrari, Qing Zhao, and Anna Scaglione
\IEEEcompsocitemizethanks{\IEEEcompsocthanksitem L. Ferrari and A. Scaglione \{Lorenzo.Ferrari,Anna.Scaglione@asu.edu\} are with the School of Electrical, Computer and Energy Engineering, Arizona State University, Tempe,
AZ, 85281, USA.\protect
\IEEEcompsocthanksitem Q.Zhao \{qz16@cornell.edu\} is with the School of Electrical Engineering, Cornell University,
Ithaca, NY, 14853, USA.\protect
% note need leading \protect in front of \\ to get a newline within \thanks as
% \\ is fragile and will error, could use \hfil\break instead.
}% <-this % stops a space
\thanks{This work was supported in part by the National Science Foundation CCF-247896}}
\maketitle
\IEEEpeerreviewmaketitle
\begin{abstract}
We consider the problem of optimally utilizing $N$ resources, each in an unknown binary state. The state of each resource can be inferred from state-dependent noisy measurements. Depending on its state, utilizing a resource results in either a reward or a penalty per unit time. The objective is a sequential strategy governing the decision of sensing and exploitation at each time to maximize the expected utility (i.e., total reward minus total penalty and sensing cost) over a finite horizon $L$. We formulate the problem as a Partially Observable Markov Decision Process (POMDP) and show that the optimal strategy is based on two time-varying thresholds for each resource and an optimal selection rule for which resource to sense. Since a full characterization of the optimal strategy is generally intractable, we develop a low-complexity policy that is shown by simulations to offer near optimal performance. This problem finds applications in opportunistic spectrum access, marketing strategies and other sequential resource allocation problems.
%There is a vast literature that studies optimal sequential sensing policies based on the typical detection metrics of delay and miss detection and false alarm probabilities. In this paper we introduce and analyze a new formulation of the sequential sensing problem in which the tests are aimed at maximizing a utility function. The utility grows with the remaining time left to exploit the resource tested and decreases with the  penalty for using it in the wrong state. We characterize the structure of the optimal strategy as being based on two time-varying thresholds. Since obtaining the optimal thresholds is exponentially-complex, we propose a heuristic and demonstrate its near optimal performance with simulation examples.
\end{abstract}

\begin{IEEEkeywords}
Optimum sequential testing, opportunistic spectrum access, multi-channel sensing, cognitive radio.
\end{IEEEkeywords}

%\begin{center} \bfseries EDICS Category: 137. SPC-SPCT, 52. MLR-SLER, 39. MLR-BAYL \end{center}
\section{Introduction}
\label{sec:introduction}
We consider the problem of optimally utilizing $N$ resources over a horizon of length $L$. The state of each resource is either $0$ (``good'') or $1$ (``bad'') and is unknown \emph{a priori}. Utilizing resource $i$ results in either a reward $r_i$ or a penalty $\rho_i$ per unit time, depending on its state. To infer the states of the resources, only one of them can be sensed at each given time, accruing a random measurement drawn from a distribution determined by the state of the chosen resource. We study the optimal sequential strategy governing the decision of sensing and exploitation at each time, that maximizes the expected utility (i.e., total reward minus total penalty) over the finite horizon. Since sensing reduces the available time for utilization, the essence of the problem is in balancing the overhead associated with sensing and the penalty of utilizing resources in a bad state. Due to the limited sensing capability (i.e., only one resource can be sensed at a given time), it is also crucial to choose judiciously which resource to sense at each time. This problem arises in cognitive radio systems where the resources are channels that can be either busy or idle. 
\vspace{-.3cm}
\subsection{Main Results}
We formulate the problem above as a partially observable Markov decision process (POMDP) with state given by the current time, the set of remaining resources for which a utilization decision has not been made, and the belief on the state of these resources (i.e., the posterior probability that a resource is in state $0$) in Section \ref{sec:formulation}. A policy of this POMDP consists of three components: a set of selection rules indicating which resources to sense at each time, a set of stopping rules, and the induced stopping times, indicating when to terminate the sensing phase of each resource, and a set of decision rules governing whether to exploit or discard a resource after sensing (see Section \ref{subsec:POMDP}). 
\par In Section \ref{sec:design} we explicitly characterize the optimal decision rules and 
%establish the threshold structure of the optimal stopping rules. Specifically, we
 show that the optimal stopping rule is given by two time-varying thresholds that also depend on the resources with pending decision and,
since characterizing the time-varying thresholds and the optimal resource selection rules is analytically intractable, we develop a low-complexity suboptimal policy
(described in \ref{subsec:heuristic}).
Based on insights gained from the monotone properties of the thresholds in the single-resource case, we develop a recursive algorithm for computing approximate thresholds. For the resource selection rule, we propose an index policy based on an interchange argument. This index rule prioritizes resources according to the integrated effect of their utilization reward $r_i$, belief values, and the quality of the sensing measurements (which determines the speed of the state inference). 
In Section \ref{sec:applications}, we showcase two applications of our framework: cognitive radio and marketing strategy. In Section \ref{sec:simulation} we present simulation results to corroborate our theoretical findings.
The asymptotic optimality of the index selection rule is also discussed.
%{\bf Potential Applications} -- 
%Our framework finds a natural application in spectrum sensing for cognitive radio but there are various applications that go beyond cognitive radio systems, that could be cast in the same formulation that fits the model of a decision maker that has finite time for the decision, a growing cost with the time spent acquiring information and a diminishing and uncertain reward. 
%For instance, in medical applications we could have a fixed cost for each analysis whereas the benefit of the final decision might be highly dependent on the time we have left for our experiment. In fact, during an epidemic, the faster we can find a cure the more people we can treat in the remaining finite time.
%We can also think of a stock portfolio investment, where we want to find which stocks can offer us more profit as soon as possible to maximize our earnings and the analysis of the different options for our portfolio has a fixed cost, i.e. the salary we pay to the broker.
%In light of this, we decided to keep the formulation of our problem general, without specifying the physical meaning of our utility function parameters, which are later specified for a cognitive radio application.\\[.1cm]
\vspace{-.3cm}
\subsection{Related Work}
A majority of existing work on sequential sensing focuses on typical detection metrics: minimizing the expected detection delay subject to constraints on detection accuracy in terms of the probabilities of false alarm and miss detection. This body of work can be partitioned into two general categories: passive sequential hypothesis testing and active sequential hypothesis testing. The former was pioneered by
Wald in 1947\cite{Wald:1947}, { which introduced the procedure known as Sequential Probability Ratio Test (SPRT)}. Under this formulation, the observation model under each hypothesis is predetermined, and a test only needs to determine when to stop taking observations and which hypothesis to declare at the time of stopping. The latter was pioneered by Chernoff in 1959\cite{chernoff1959}.  Active hypothesis testing has a control aspect that allows the decision maker to choose the experiment to be conducted at each
time. Different experiments generate observations from different distributions under each hypothesis. An active test thus includes a selection rule (i.e., which experiment to carry out at each given time) in addition to a stopping rule and a declaration rule. Following these seminal results, there has been an extensive body of work on both passive (see, for example, \cite{arrow1949bayes,woodroofe,sequential_analysis_book,
decentralized_sequential_test,multiscale_bayesian_SPRT,sahu_distributed} and references therein) and active (see \cite{bessler1960theory,paper_convexity_proof,nitina_controlled_multi,naghshvar2013,naghshvar_sequential,kobe_active} and references therein) hypothesis testing. 
% of delay and miss detection and false alarm probabilities (see, for example, \cite{wald1948,arrow1949bayes,woodroofe,sequential_analysis_book,
%decentralized_sequential_test,multiscale_bayesian_SPRT,
%paper_convexity_proof}). 
%In these models a decision maker designs a set of tests for over $K$ instants in order to determine the state of certain resources \cite{paper_concatenated_SPRT}. 
%It is well known this problem has a dual unconstrained formulation, where the objective that has to be minimized is the sum of a sensing cost and a so-called ``Bayesian-risk'' term. 
%However there is no closed form result that indicates how to compute the multipliers in the Bayesian-risk for given error probabilities target, and the previous literature provides different approximations to address this duality.
%In our framework, no hard constraints are placed on the error probabilities and the decision makerstrategy is exclusively devoted to maximize a utility function which includes a sensing cost and a reward/penalty depending on the action and the real state of the resource. 
%The main novelty of our model relies on a ``time-dependent'' utility after the decision.
%This is motivated by many applications, where a delay cost associated to the time devoted for testing can have different physical reasons than the utility a right or wrong action (i.e. declaration) can provide for the remaining time.
\par There are a couple of studies on sequential sensing for anomaly detection under the objective of minimizing operational cost \cite{paper_kobe_indexing,Cohen&Zhao:15TSP} which can be considered as a utility function. Different from this paper, these studies either restrict admissible sensing strategies to those that declare the state of one process before starting sensing another process (i.e., no switching across processes) \cite{paper_kobe_indexing} or focus on asymptotically (when the horizon length approaches infinity) optimal policies. 

In the Cognitive Radio literature our utility maximization problem can be seen as a parametric formulation of the energy efficiency metric maximization \cite{eneeff_Wu_Xu_Chen,eneeff_Pei_Liang_Li}.
In \cite{eneeff_Wu_Xu_Chen} the presence of PU communication is inferred via energy detection (known to be optimal when no prior information is available) and a single PU channel was considered. 
Multiple PU channels and the capability of switching between different channels, known as {\it spectrum handoff}, is investigated in \cite{spectrum_handoff_wang_vehic} and an efficient convex optimization procedure is developed to solve for the optimal values of the sensing slot duration and the channel switching probability in order to minimize energy consumption while guaranteeing satisfying throughput.
In \cite{eneeff_Pei_Liang_Li} prior knowledge (the vector $\boldsymbol{\omega}$ in our model) over the state of different channels is considered, but the sequential decision process terminates when the SU decides to transmit over one single channel. The problem is formulated as DP but no heuristic is provided to tackle the combinatorial complexity and no further insight on the threshold structure of the decision is proposed. 
The threshold structure for the channel probing stopping rule that has been proved in \cite{optimal_channel_probing} and that \cite{eneeff_Pei_Liang_Li} refers to, also considers only one possible transmission and a constant data time scenario, i.e. the transmission time is not affected by the time spent in sensing the channels.
More general utility maximization approaches for cognitive networks can be found in
\cite{cognitive_um_zheng} and \cite{utility_CRSN_zheng}, which leverage the class of utility maximization functions for optimal congestion and contention control introduced in \cite{NUM_cc_Calderbank}.
 A censored truncated sequential spectrum sensing procedure is presented in \cite{censored_truncated_maleki}\footnote{A review on censoring and sequential procedures can be found in the references of \cite{censored_truncated_maleki}}, where different cognitive radios sense the same channel and decide whether to send their estimates to a fusion center that then performs the final decision over the presence of a PU.
We instead consider a single cognitive radio that can sense different channels (but only one at a time), which could be seen as using only one sensor per channel, and therefore each sensor is sensing a different state. 
The limit of one sensing operation at a time, in our formulation, could be seen as a rigid censoring constraint, with the possibility of suspending the decision over a channel, while continuing to sense and/or exploit others, and then potentially reconsider whether to sense or exploit that channel after some time instants.
A concatenation of SPRT is proposed as a low-complexity, asymptotically optimal policy in  \cite{paper_concatenated_SPRT}.
\par Additional relevant works can be found in \cite{Sensing-Throughput,POMDP_Qing,survey,diversity_ICASSP,optimal_linear_cooperation,survey2,coop_comm_for_CRN,eff_discovery,eigenvalue,saeed_restless}. The vast majority considers a rigid separation between exploration and exploitation phase, while in our framework we enable a combination of the two over different resources, by accessing some channels and simultaneously sensing a different resource at each time.
For the application projection of our work, we will consider spectrum sensing for cognitive radios and market strategy with online advertisement.
Notice that in both these applications, several works have adopted a Multi-Armed Bandit (MAB) formulation (see for example \cite{saeed_restless,gai2010learning} for cognitive radios , \cite{NIPS2008_3580,lu2010contextual} for online advertising), whereas others (including this work) followed a Partially Observable Markov Decision Process (POMDP) framework \cite{POMDP_Qing,DBLP:journals/corr/KrishnamurthyAB16}. 
It is important to highlight the following difference: 
in the MAB formulation the utility obtained by the player that selects a certain "arm" (i.e. an action) is the only information that is used to optimize the future choices. 
The concept of POMDP, instead, can be used to cast a wider class of decision problems where at each time epoch a certain action needs to be designed, that provides indirect information on what strategy the player should use to harness the maximum utility. In other words a 'sensing' action informs what the player should do. The MAB formulation is a special case of the POMDP in which the sensing action and the action that brings the utility to the player are the same thing. 
Therefore, the POMDP formulation allows in principle for a richer action space than a MAB problem and a POMDP cannot necessarily be mapped into a MAB problem.
\par
The key difference between this work and the vast body of results on hypothesis testing is that the design objective in the problem studied here is the utility maximization that directly addresses the trade-off between exploration (i.e., detecting the state of each resource) and the time that remains for exploitation (of the resources, based on the information gathered during the sensing phase). Hypothesis testing problems, passive or active, are pure exploration problems.
With respect to the works in \cite{paper_kobe_indexing}-\cite{Cohen&Zhao:15TSP}, the problem considered in this paper allows switching across resources and focuses on a finite horizon, resulting in a full-blown POMDP that is significantly more difficult to tackle (for a discussion on the general complexity of POMDP see \cite{pomdp_complexity_mundhenk}).
In \cite{eneeff_Wu_Xu_Chen}-\cite{spectrum_handoff_wang_vehic} no prior information is available and all the channels are equal. Therefore, there is no ordering of the channels to take into account and no SPRT procedure to be optimized for the sensing (performed via energy detection with a deterministic sensing time), whereas our paper tackles both aspects.
The difference of our model with respect to a concatenation of truncated SPRT (and therefore a standard truncated SPRT) \cite{paper_concatenated_SPRT} is highlighted at the end of Section \ref{sec:formulation}, considering the value function of the POMDP associated with our problem and in the comparison presented in the simulation results in Section \ref{sec:simulation}. 
It is useful to remark that, by considering a \emph{time-dependent} utility after the decision, in our model the constraints on the detection metrics vary between channels and over time, whereas in the majority of other works the detection metrics constraint are typically constant over time (see \cite{Wald:1947} for analysis of truncated sequential hypothesis). To be more precise, our model can be seen as a Bayesian-risk formulation, where our utility terms can be seen as the Lagrangian multiplier of the constraints associated to the detection metrics, that change over time in light of the \emph{time-dependent} utility function.
It is clear that when $L$ approaches infinity, the strategy developed in this paper would not offer a significant advantage with respect to other simplest quasi-static strategies (as discussed in \cite{javidi_almost_fixed} for a single resource case) or the concatenation of SPRT proposed in \cite{paper_concatenated_SPRT}.
However, for a finite time horizon, the proposed strategy offers a better performance, as discussed both analytically at the end of Section II and shown numerically in Section \ref{sec:simulation}.
Even if the behavior of the decision thresholds can be intuitive, our approach to approximate the decision thresholds (Algorithm \ref{alg:threshold_approximation}) and the bound on the expected sensing time (presented in Lemma \ref{lemma_bound_expected_sensing_time} and used in Algorithm \ref{alg:heuristic}), that form our low-complexity heuristic, are completely novel and, in light of the time-dependence and the absence of a close form expression to map the detection metrics constraints into Lagrange multipliers for our formulation, cannot be mapped directly into previous results. 
The discussion on the asymptotic regret in Section \ref{sec:asympt_regret} is reported for completeness, but the asymptotic analysis is rather straightforward and does not represent the main focus of this work.
%
%\subsection{Overview}
%The paper is organized as follows: in Section \ref{sec:formulation} we formulate the problem and present the associated optimization problem in Section \ref{subsec:statement}. 
%The equivalent Partial Observable Markov Decision Process is discussed in Section \ref{subsec:POMDP}. In Section \ref{sec:design} we characterize the structure of the optimal strategy, and present approximations that lead to a computationally tractable policy (described in \ref{subsec:heuristic}), which we compute for a specific sensing model in Section \ref{sec:sensing_model} where we apply our model to a spectrum sensing problem in a Cognitive Radio scenario, where a secondary transmitter is seeking transmission opportunities in the spectrum but does not want to interfere with existing primary communications.
%Considering the sensing model in Section \ref{sec:sensing_model}, in Section \ref{sec:simulation} we present the simulation results and compare our low-complxity policy with the optimal strategies. 
%Section \ref{sec:conclusions} concludes our work. 
\section{Problem Formulation}\label{sec:formulation}
\subsection{Problem Statement}\label{subsec:statement}
Assume an agent has $L$ instants of time available for the sensing and exploitation of a set of resources $\N=\{1,2,\dots,N\}$. 
The agent accrues a reward that is a function of an underlying state vector 
\begin{equation}
\bs \triangleq (s_{1}, \ldots, s_N)^T\in \bS\equiv\{0,1\}^N,
\end{equation}
where the entries $s_i\in \{0,1\}$ and $\bS$ is the set of all possible states. 
We consider the $s_i$ as indicators of good ($0$) or bad ($1$) state of a resource. 
For instance,  the ``idle'' or ``busy'' state of a sub-band, where the decision maker has to explore/sense the channels and gets a reward for utilizing an idle channel and a penalty for utilizing a busy channel.
We assume the states $s_i$ are mutually independent Bernoulli random variables with known prior probabilities given by:
\begin{align}
\boldsymbol{\omega}[0]&=\{\omega_i[0]:  i=1,2,\dots,N\}\\
\omega_i[0]&\triangleq P(s_i=0).
\end{align}
Let $\A_k$ denote the set of resources for which a final decision of utilizing or discarding has not been reached at time $k$ ($k=0,1,\ldots, L-1$). 
Clearly, $\A_0$ includes all $N$ resources. 
%At time $k$, based on the current knowledge on the state of each resource, the decision maker first removes from $\A_k$ all those resources for whom a final decision of utilizing or discarding can be made. She then chooses one of the remaining resources to sense at time $k$, and we refer to this resource as $\phi_k$. 
Our model then allows to access multiple resources (removed from $\A_{k}$) and to sense one resource ($\phi_k$) at each time instant $k$.
Underneath the decision there is a sequential binary hypotheses testing problem where a sample $o[k]$ from the selected resource $\phi_k$ is collected and the conditional probability density functions of the observations are assumed to be known:
\begin{align}
&H_0^{\phi_k}:s_{\phi_k}=0, ~~~ o[k]\overset{i.i.d}{\sim} f_0^{\phi_k}(o)\\
&H_1^{\phi_k}:s_{\phi_k}=1, ~~~ o[k]\overset{i.i.d}{\sim} f_1^{\phi_k}(o) 
\end{align}
We want to maximize a utility function that strikes the best trade-off between the need of acquiring information on the state $\bs$ of the channels and the desire of exploiting good resources as early as possible.  
The decision maker needs to design: 1) a set of $N$ stopping rules for the stopping times $\btau\triangleq\{\tau_i:i=1,2,\ldots,N\}$, one for each resource, indicating when a final decision of utilizing or discarding the resource can be made; 2) a set of $N$ decision rules $\bdelta\triangleq\{\delta_i\in\{0 (\mbox{utilize}), 1 (\mbox{discard})\}:i=1,2,\ldots,N\}$, one for each resource, indicating the final decision at the time $\tau_i$ of stopping; 3) a sequence of selection rules $\bphi\triangleq\{\phi_k:k=0,1,\ldots,L-1\}$ indicating which resource to sense at time $k$ (if $\A_k$ is not empty). 
Let $\tau_i$ denote the time instant at which a final decision on whether to utilize or discard resource $i$ is made and $\tilde{\tau}=\max_i{\tau_i}$ be the total sensing time, all depending on the two sets of rules  $(\btau, \bphi)$.
The decision maker's actions are the solutions to the following optimization problem:
\begin{equation}\label{eq:opt_problem_initial_formulation}
\begin{aligned}
& \underset{\btau,\bdelta,\bphi}{\text{max.}}
& & \!\!\!\!\!\!\!\!\!\!\!\!\!\!\mathds{E}\!\left[U(\bs,\N,L,c,\btau,\bdelta,\bphi)\right]\!\triangleq\!\mathds{E}\!\left[\!-c\tilde{\tau}\!+\!\sum_{i=1}^N (L-\tau_i)R_i(s_i;\delta_i)\!\right]\!\!\!\!\\
& \text{subject to}
& & \tilde{\tau}\leq L-1. \\
\end{aligned}
\end{equation}
%The objective function
%\be\label{eq:objective_function} 
%\!\!\mathds{E}\!\left[U(\bs,\N,L,c,\btau,\bdelta,\bphi)\right]\!\triangleq\!\mathds{E}\!\left[\!-c\tilde{\tau}\!+\!\sum_{i=1}^N (L-\tau_i)R_i(s_i;\delta_i)\!\right]\!\!\!\!
%\ee
where the objective function has two terms:
\begin{itemize}
\item $-c\tilde{\tau}$ represents an effective sensing cost, and $c$ is the sensing cost per unit of time;
\item the function $R_i(s_i;\delta_i)$ is the utility per unit time for exploiting resource $i$:
\be\label{eq:utility_single}
R_i(s_i;\delta_i)=\begin{cases}r_i ~~~&\mbox{if}~~\delta_i=s_i=0\\
-\rho_i ~~~~~&\mbox{if}~~\delta_i=0, s_i=1\\
0~~~~&\mbox{if}~~\delta_i=1\end{cases}
\ee 
where $r_i,\rho_i>0$ indicate, respectively, the reward and the penalty for utilizing a {\it good} resource and utilizing a {\it bad} resource.
\end{itemize}
We can immediately notice an asymmetry in the function corresponding to the {\it exploitation reward} in \eqref{eq:utility_single} since, if one decides the resource is in {\it bad} state (i.e. that $s_i=1$), the utility accrued is the same regardless of the real state of the channel. 
%This means that, unlike the other formulations we mentioned in the introduction, the decision makerexperiences no cost for what we could compare to a {\it false alarm} in the classic formulations of the SPRT problem. 
%This assumption makes perfect sense in some applications (e.g. spectrum sensing) where we should be penalized if we interfere with an active communication, but there is no reason to be penalized directly for missing session opportunities, unlike in the classic SPRT problem. 

Note that, since $c$ can be moved out of the expectation in \eqref{eq:opt_problem_initial_formulation},
$\mathds{E}\left[U^*(\bs,\N,L,c)\right]=\underset{\tau,\delta,\phi}{\max}~\mathds{E}\left[U(\bs,\N,L,c,\btau,\bdelta,\bphi)\right]$ decreases  monotonically with $c$, therefore the value of $c$ is typically limited by an effective cost (i.e. the energy required for the receiver to continue sensing a channel in a spectrum access problem) and is not chosen to further optimize the achievable expected utility.
%As our simulations will clarify, we are typically interested in solving the problem in \eqref{eq:opt_problem_initial_formulation} in a scenario when we have a large enough number of time instants $L$ compared to the number of resources $N$ and a small enough sensing cost $c$ compared to the reward/penalty $r_i,\rho_i$ in order for the sensing option to be feasible for all channels.
\subsection{A POMDP Formulation}\label{subsec:POMDP}
In this subsection, we will model the optimization in \eqref{eq:opt_problem_initial_formulation} as a Partially Observable Markov Decision Problem (POMDP), and use dynamic programming tools to describe the optimal $\btau^*,\bdelta^*,\bphi^*$.
The state vector $\bs$ is not directly observable, therefore the decision maker has to rely on her belief regarding the occupancy of the resources in order to make a decision. 
Let $\boldsymbol{\omega}[k]$ denote the belief vector at instant $k$ (i.e., the vector of posterior probabilities that a resource is in state $0$ given all past observations). 
We can use Bayes rule to derive the belief update after a new observation and write  
\begin{align}
\boldsymbol{\omega}[k+1]&=\bPi(\boldsymbol{\omega}[k],o[k],\phi_k)\label{eq:belief_update}\\
\omega_i[k+1]&=\Pi_i(\omega_i[k],o[k],\phi_k)\\
\!\!\!\Pi_i(\omega_i,o,\phi)&\triangleq\begin{cases}\dfrac{\omega_if_0^i(o)}{\omega_i f_0^{i}(o)+(1-\omega_i)f_1^{i}(o)} ~~&\mbox{if}~~i=\phi\\ \omega_i ~~&\mbox{if}~~i\neq \phi \end{cases}
\end{align}

The optimization problem in \eqref{eq:opt_problem_initial_formulation} can therefore be formulated as a POMDP where the belief vector $\boldsymbol{\omega}[k]$ represents the {\it state} for the decision maker and the state transitions equations are given by the belief update rule in \eqref{eq:belief_update}.
At each time instant $k$, based on the current belief $\bomega[k]$, the decision maker first removes from $\A_k$ all those resources for whom a final decision of utilizing or discarding can be made (we refer to this set as $\D$), and then she chooses one of the remaining resources to sense at time $k$.
%We can therefore consider ${\cal A}_k$ as part of the state as well. 
The value function of the POMDP problem can be expressed as:
\begin{align} 
&V(\bomega,\A_k,k)\label{eq:value_function_exp}\\
&\triangleq\max\limits_{\D\subseteq \A_k}\left\{(L-k)V_d(\bomega,\D)+\max\limits_{i\in\A_{k+1}\equiv\A_k\setminus\D}V_t^i(\bomega,\A_{k+1},k)\right\}\nonumber
\end{align} 
where $V_d(\bomega,D)$ indicates the maximum expected reward given by the resources in $\D$ for each of the $(L-k)$ remaining instants and $V^i_t(\bomega,\A,k)$ represents the value accrued for deciding to sense the resource $i$ at time $k$ (i.e. $\phi_k=i$).
%\begin{remark}\label{lemma_action_space_cardinality}
%Let $\mathds{A}_k$ indicate the cardinality of the action space for $V(\bomega,\A_k,k)$. We have
%\be\label{eq:action_space_cardinality}
%\mathds{A}_k=1+\sum_{\ell=0}^{|\A_k|-1}\binom{|\A_k|}{\ell}(|\A_k|-\ell)
%\ee 
%and therefore
%\be\label{eq:action_space_cardinality_bound}
%2^{|\A_k|}<\mathds{A}_k<|\A_k|2^{|\A_k|}
%\ee 
%\end{remark}
%See Appendix \ref{app:action_space_cardinality_proof} for details.
It is easy to see from the utility function \eqref{eq:utility_single} that $V_d$ has an additive structure, i.e. $V_d(\bomega,\D)=\sum_{i\in\D}V_d^i(\omega_i).$
Mathematically the functions $V_d^i(\omega_i)$ and $V_t^i(\omega_i,\A,k)$ can be expressed as follows:
\begin{align}
V_d^i(\omega_i)&\triangleq\max_{\delta_i\in\{0,1\}}\mathds{E}\left[R_i(s_i;\delta_i)\right]\label{eq:V_d_def}\\
\!\!\!\!\!V_t^i(\boldsymbol{\omega},\A,k)\!&\triangleq\!-c\!+\!\!%\mathop{\mathlarger{\int}}\limits_{\mathbf{O}}
\int\! V\!\left(\bPi(\bomega,o,i),\A,k+1\right)f_{1-\omega_i}^i(o) do\label{eq:V_t_def}
\end{align}
for $k=0,1,\dots,L-1$ {where we need a final condition $V(\bomega,\A,L)=0, \forall \A,\bomega$,  to encode the constraint in \eqref{eq:opt_problem_initial_formulation}}. %{ 
In \eqref{eq:V_t_def} we have defined 
\be 
f_{1-\omega_i}^{(i)}(o)\triangleq \omega_i f_0^{(i)}(o)+(1-\omega_i)f_1^{(i)}(o)
\ee 
%The decision process starts then from evaluating $V(\bomega,\N,0)$.
%The function $V_d^i(\omega_i)$ contains the maximization over $\delta_i$  and it is easy to see that, since 
%\be 
%\mathds{E}\left[R_i(s_i;\delta_i)\right]=\begin{cases}(r_i+\rho_i)\omega_i-\rho_i &\mbox{for}~\delta_i=0 
%\\0 &\mbox{for}~\delta_i=1 \end{cases}\label{eq:E_R_i}
%\ee 
%we can write 
It is easy to see that
\be 
V^i_d(\omega_i)=\max\{(r_i+\rho_i)\omega_i-\rho_i,0\}.\label{eq:V_d_def_no_delta}
\ee 
Thus the optimal final decision $\delta^*_i$ is given by
\be\label{eq:optimal_decision_rule}
\delta^*_i=u\left(\frac{\rho_i}{\rho_i+r_i}-\omega_i\right)
\ee
where $u(\cdot)$ is the unit step function.
The value function 
$V(\bomega,\A_k,k)$ can be seen as the result of the maximization of a set function under set constraints, that is:
\be\label{eq:set_function_value_function}
V(\bomega,\A_k,k)=\max\limits_{\D \subseteq \A_k}J(\D)
\ee
where
\begin{align} 
\!\!J(\D)\triangleq\left\{(L-k)V_d(\bomega,\D)+\!\!\max\limits_{i\in\A_k\!\setminus\!\D}V_t^i(\bomega,\A_k\!\setminus\!\D,k)\right\}\label{eq:set_function}
\end{align} 
and this formulation will be used in the remainder of this work to show the structure of the optimal policy.
We would like to point out that the formulation of the POMDP in \eqref{eq:value_function_exp} gives no indication on the fact that the channel sensed at time $k$ should continue to be sensed at time $k+1$ or included in the set $\D$, i.e. a concatenation of independent truncated SPRT over each channel represents a suboptimal strategy for our problem and is optimal only for $L\rightarrow\infty$. 
\begin{remark}
Our formulation can be modified to account for correlation between group of resources. If we keep the condition that it is possible to sense only one resource at each time, the belief update rule in \eqref{eq:belief_update} should be modified to update also the $\omega_i$'s of resources correlated with the sensed one. The subset of resources, that can be accessed at each time, should contain all the ones correlated with each other, i.e. if two resources are correlated, they should be added to $\mathcal{D}$ at the same time, due to the possibility of gaining knowledge on a resource for which a terminal decision has already been made. It follows that, for each group of correlated resources, we have to consider a sub multi-hypothesis problem for all the possible combinations of binary states, and find a similar approximation as the one presented in the next section for the decision regions in the plane of the belief vector over these resources (for details on the geometry of such structure see \cite{multi_procedure_baum}). This in principle can be handled for small groups of correlated resources, but makes the problem even more complex.
\end{remark}
%and we can prove 
%\begin{lemma}\label{lemma_submodularity}
% $\Phi(\D)$ is a submodular function.
%\end{lemma}
%\begin{proof}
%See Appendix \ref{app:submodularity_proof}
%\end{proof}
%From what is known about the maximization of a submodular function \cite{submodular} under set constraints, a discrete greedy algorithm would give us a $\frac 1 2$ approximation of the optimum solution of the POMDP in \eqref{eq:value_function_exp} for which the action space grows exponentially in the number of resources left in $\A_k$.
%More precisely,   
%
%It is preferable to give an equivalent representation for which the action space size is linear in the number of resources left in the system, rather than exponential and for this, the assumption of additive structure for $V_d$ will play a key role.
%We introduce the following Lemma 
%\begin{lemma}\label{lemma_equivalent_POMDP_value_function}
%{\it 
%The POMDP defined by \eqref{eq:value_function_exp} has an equivalent representation given by
%\be\label{eq:V_def} 
%V(\bomega,\A,k)\triangleq\max_{i\in\A}V^i(\bomega,\A,k)
%\ee 
%where the value of the $i$-th resource is the maximum future value that can be accrued choosing between sensing the resource $i$ or taking a final decision for that resource, i.e.: 
%\begin{align}\label{eq:V_i_def}
%&V^i(\bomega,\A,k)\nonumber\\&\triangleq\max\left\{(L-k)V_d^i(\omega_i)+V(\bomega,\A-i,k),V_t^i(\bomega,\A,k)\right\}
%\end{align}
%}
%\end{lemma}
%\begin{proof}
%See Appendix \ref{app:equivalent_POMDP_value_function_proof}
%\end{proof}
\section{Opportunistic Sequential Sensing}\label{sec:design}
\subsection{The Optimal Stopping Rule and Decision Rule}
In this section we first describe the correspondence between the actions of our decision process as solution of \eqref{eq:value_function_exp} and the optimal rules $(\btau^*,\bphi^*)$ introduced in \ref{subsec:statement}, that the decision maker needs to determine.
Then we introduce a low-complexity policy to approximate the optimal action.
Let us start by considering the solution of \eqref{eq:set_function_value_function} at time $k$, i.e.:
\be 
\D^*=\argmax\limits_{\D\subseteq \A_k}J(\D).
\ee
By definition of the set of optimal stopping times $\btau$ (see \eqref{eq:opt_problem_initial_formulation}), for each resource $i$, the optimal stopping time $\tau^*_i$ is:
\be\label{eq:optimal_stopping_rule}
\tau^*_i=\begin{cases}k,~&\mbox{for}~~i\in\D^*\\
k'>k~~&\mbox{for}~~i\in\A_k\setminus\D^*.
\end{cases}
\ee
In light of the outer maximization over $\D$ in \eqref{eq:value_function_exp}, the optimal selection rule for the decision maker will be: 
\be\label{eq:optimal_selection_rule}
\phi^*_k=\argmax_{i\in\A_{k+1}\equiv\A_k\setminus\D^*}V_t^i(\bomega,\A_{k+1},k)
\ee
%According to \eqref{eq:optimal_decision_rule}, this situation corresponds to the action dictated by the optimal stopping rule $\tau^*_{i^*}=k$,  of removing the resource $i^*$ from the set $\A_k$ and deciding $\delta^*_{i*}$ without further sensing it.
%It is also possible, in general, that multiple resources will be removed from $\A_k$ at the same time $k$; therefore, the argument $\A$ of our value function $V(\bomega,\A,k)$ respects in general $\A_k\supseteq\A$, where we introduced $\A_k$ in \ref{subsec:statement}.
%If instead we have that, at time $k$, for $i^*\in\A$ 
%\begin{align*}
%i^*&=\argmax\limits_{i\in\A} V^i(\bomega,\A,k)\\
%V^{i^*}(\bomega,\A,k)&=V_t^i(\boldsymbol{\omega},\A,k).
%\end{align*}
%then the optimal selection rule will be $\phi^*_k=i^*$ and also $\forall i\in\A$, $\tau_i^*>k$.
%Note that the formulation in \eqref{eq:V_def}-\eqref{eq:V_i_def} essentially picks the maximum of $2|\A|$ functions.
From \eqref{eq:optimal_stopping_rule} and \eqref{eq:optimal_selection_rule} it emerges how the optimal selection rules $\bphi^*$ and the optimal stopping times $\btau^*$ are coupled.
%\lorenzo{comment here and anticipate what we will show?}
%We will show the design of $\bphi$ and $\bdelta$ can be done simultaneously defining decision regions where the decision makerwill either choose to test a resource or assign a value $\delta_i\in\{0,1\}$ on a certain resource and then consider the remaining resources in $\A-i$.
We then introduce the following lemma:
\begin{lemma}\label{lemma_convexity_V_i}
{\it
$\forall { \A_k}\subseteq\N$, $i,j\in\A_k$ and for $k=0,\dots,L-1$, both
$V^i_t(\bomega,\A_{k+1},k)$, $V(\bomega,\A_k,k)$ are convex functions of $\omega_j$.}
\end{lemma}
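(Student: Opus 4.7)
The plan is to prove both claims jointly by backward induction on $k$, leveraging the terminal condition $V(\bomega,\A,L)\equiv 0$ and three classical preservation mechanisms for convexity: sums, pointwise maxima over finite families, and integration against Bayesian posteriors. I fix an index $j\in\A_k$ and regard the other components of $\bomega$ as parameters, so ``convex'' always means convex in the scalar $\omega_j$.

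The base step $k=L$ is immediate. For the inductive step I assume $V(\cdot,\A,k+1)$ is convex in $\omega_j$ for every $\A$ and argue as follows. By \eqref{eq:V_d_def_no_delta}, each $V_d^l(\omega_l)$ is the maximum of two affine functions, so $V_d(\bomega,\D)=\sum_{l\in\D}V_d^l(\omega_l)$ is convex in $\omega_j$. To handle $V_t^i(\bomega,\A_{k+1},k)$ in \eqref{eq:V_t_def} I split on whether $j\neq i$ or $j=i$. When $j\neq i$ the $j$-th component of $\bPi(\bomega,o,i)$ equals $\omega_j$ and the weighting density $f_{1-\omega_i}^i(o)$ does not depend on $\omega_j$, so convexity in $\omega_j$ passes under the integral by the inductive hypothesis.

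The critical subcase $j=i$, which I expect to be the main obstacle, hinges on the belief--mixing identity
\[
\Pi_i(\omega_i,o,i)\,f_{1-\omega_i}^i(o)=\omega_i f_0^i(o),
\]
together with $f_{1-\omega_i}^i(o)=\omega_i f_0^i(o)+(1-\omega_i)f_1^i(o)$. For $\omega_i=\lambda\omega_i^{(1)}+(1-\lambda)\omega_i^{(2)}$, these two linearities force $f_{1-\omega_i}^i(o)$ to be the analogous convex combination of the component mixture densities, and force $\Pi_i(\omega_i,o,i)$ to be a convex combination of $\Pi_i(\omega_i^{(1)},o,i)$ and $\Pi_i(\omega_i^{(2)},o,i)$ with $o$-dependent weights $\alpha(o)=\lambda f_{1-\omega_i^{(1)}}^i(o)/f_{1-\omega_i}^i(o)$ and $1-\alpha(o)$. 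Applying the inductive convexity of $V(\cdot,\A_{k+1},k+1)$ in its $i$-th belief entry pointwise in $o$, multiplying through by $f_{1-\omega_i}^i(o)$ (which cancels the $\alpha(o)$ denominators), and integrating in $o$, the right-hand side assembles into $\lambda$ and $(1-\lambda)$ multiples of $V_t^i(\bomega^{(1)},\A_{k+1},k)+c$ and $V_t^i(\bomega^{(2)},\A_{k+1},k)+c$, and the additive $-c$ cancels, yielding convexity of $V_t^i$ in $\omega_i$.

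Finally, for each fixed $\D\subseteq\A_k$ and each $i\in\A_k\setminus\D$, the expression $(L-k)V_d(\bomega,\D)+V_t^i(\bomega,\A_k\setminus\D,k)$ is a sum of functions convex in $\omega_j$, hence convex. Representation \eqref{eq:set_function_value_function} exhibits $V(\bomega,\A_k,k)$ as the maximum of this finite family indexed by $(\D,i)$, so it inherits convexity, closing the induction simultaneously for $V$ and $V_t^i$. The only nontrivial mechanism is the belief-mixing step; everything else reduces to elementary bookkeeping with sums and finite maxima.
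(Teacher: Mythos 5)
Your proof is correct and follows essentially the same route as the paper's: backward induction from the terminal condition, a Jensen-type step for the $j=i$ case using the linearity of the mixture density in $\omega_i$ and the $o$-dependent reweighting of the posterior (your $\alpha(o)$ is exactly the paper's $\mu$), direct passage of convexity under the integral for $j\neq i$, and convexity of $V$ as a finite maximum of sums of convex functions (the paper phrases this last step via the $f_1$/$f_2$ split over sets containing or excluding $j$, but the mechanism is identical).
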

\begin{proof}
The proof is in Appendix \ref{app:proof_convexity_V_i}. It is similar to the approach used in \cite{Zhao&Ye:2010TSP} to prove the convexity of the Q-functions.  \end{proof}
Lemma \ref{lemma_convexity_V_i} induces the structure of the optimal stopping rules, which describe the optimal stopping times $\tau^*_i,~i=1,2,\ldots,N$ that we formally present in the following theorem:
\begin{theorem}\label{th:optimal_policy_structure}
{\it
The optimal stopping time $\tau^*_i, \forall i\in\N$ is described by two thresholds $(\nu_1^{i},\nu_0^{i})$ that depend on the remaining time and channels to be explored, i.e. ($\nu_1^{i}=\nu_1^{i}(\A_k,k),\nu_0^{i}=\nu_0^{i}(\A_k,k)$) such that at any $k$ the optimum action for the resource $i$ is
\begin{itemize}
\item take a final decision over resource $i$ (i.e. $\tau_i=k$) if \\$\omega_i[k]\geq\nu_0^{i}(\A_k,k)~\vee~\omega_i[k]\leq\nu_1^{i}(\A_k,k)$ 
\item postponing the decision on resource $i$ (i.e. $\tau_i>k$), if \\$\nu_1^{i}(\A_k,k)<\omega_i[k]<\nu_0^{i}(\A_k,k)$
\end{itemize}
Furthermore $\forall i,\A_k,k$ \\
$\nu_1^{i}(\A_k,k)\leq\frac{\rho_i}{\rho_i+r_i}\leq\nu_0^{i}(\A_k,k)$ and from \eqref{eq:optimal_decision_rule} it follows that: 
\be
\delta_i^*=\begin{cases}1 &\mbox{if}~\omega_i[k]\leq\nu_1^i(\A_k,k)\\ 
0 &\mbox{if}~\omega_i[k]\geq\nu_0^i(\A_k,k).\end{cases}
\ee }
\end{theorem}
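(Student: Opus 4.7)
My plan is to fix the beliefs $\bomega_{-i}$ of all resources other than $i$, view $V(\bomega,\A_k,k)$ as a function of the single variable $\omega_i$, and characterize the set of $\omega_i$ on which it is optimal to include $i$ in the stopping set $\D^*$ at time $k$. The first step is to split the outer maximization in \eqref{eq:value_function_exp} into a stop branch (those $\D\ni i$) and a continue branch (those $\D\not\ni i$), giving
\[
V(\bomega,\A_k,k)=\max\bigl\{A^i(\omega_i),\,B^i(\omega_i)\bigr\}.
\]
On the stop branch $i$ is moved into $\D$ and is removed from $\A_{k+1}$, so both the terms $V_d^{i'}$ with $i'\neq i$ and the future sensing term $V_t^j$ (with $j\in\A_k\setminus\D$, hence $j\neq i$, and with $i\notin\A_{k+1}$) cease to depend on $\omega_i$. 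Isolating the surviving contribution yields $A^i(\omega_i)=(L-k)\,V_d^i(\omega_i)+C_i$ with $C_i$ constant in $\omega_i$. By \eqref{eq:V_d_def_no_delta}, $A^i$ is piecewise linear: constant on $[0,\omega^*]$ and affine with positive slope $(L-k)(r_i+\rho_i)$ on $[\omega^*,1]$, where $\omega^*:=\rho_i/(r_i+\rho_i)$.

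On the continue branch, $V_d(\bomega,\D)$ with $i\notin\D$ is free of $\omega_i$, and each admissible $V_t^j$ is convex in $\omega_i$ by Lemma~\ref{lemma_convexity_V_i}: for $j=i$ the lemma applies directly, while for $j\neq i$ the belief update $\bPi(\bomega,o,j)$ leaves the $i$-th coordinate unchanged and $V(\,\cdot\,,\A_k\setminus\D,k+1)$ is convex in $\omega_i$, so $V_t^j$ is an integral of a convex function against a measure independent of $\omega_i$. Hence $B^i$, being a maximum of convex functions, is convex in $\omega_i$. Setting $h(\omega_i):=B^i(\omega_i)-A^i(\omega_i)$, the function $h$ is convex on each of $[0,\omega^*]$ and $[\omega^*,1]$, since convex minus affine is convex.

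I then establish the boundary inequalities $h(0)\le 0$ and $h(1)\le 0$ by a coupling argument: at $\omega_i\in\{0,1\}$ the update $\Pi_i(\omega_i,o,i)=\omega_i$ is a fixed point, so sensing $i$ yields no information and only burns sensing cost and a slot of exploitation time. Any continue-branch policy is therefore weakly dominated by the policy that puts $i$ into $\D$ at time $k$ and replicates its actions on the other resources, using that $V_d^i(0)=0$ (deferring on $i$ recovers no extra utility when $\omega_i=0$) and $V_d^i(1)=r_i$ (immediate exploitation strictly improves on delayed exploitation when $\omega_i=1$). Invoking the elementary fact that the sublevel set of a convex function which contains an endpoint is an interval anchored at that endpoint, I get $\{h\le 0\}\cap[0,\omega^*]=[0,\nu_1^i]$ with $\nu_1^i\in[0,\omega^*]$ and symmetrically $\{h\le 0\}\cap[\omega^*,1]=[\nu_0^i,1]$ with $\nu_0^i\in[\omega^*,1]$. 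Assembling the two halves gives the stop region $[0,\nu_1^i]\cup[\nu_0^i,1]$, the continuation region $(\nu_1^i,\nu_0^i)$, and the ordering $\nu_1^i\le\rho_i/(r_i+\rho_i)\le\nu_0^i$. The asserted form of $\delta_i^*$ then follows immediately from \eqref{eq:optimal_decision_rule}.

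The main obstacle I anticipate is not the convexity manipulations but the bookkeeping in the decomposition of the value function: one must carefully verify that once $i$ is moved into $\D$ the $\omega_i$-dependence disappears from every surviving term except $V_d^i$, and that on the continue branch the convexity of $V_t^j$ in $\omega_i$ survives both mechanisms ($j=i$ using the lemma directly and $j\neq i$ using convexity of $V$ under a $\omega_i$-independent measure). Once the decomposition is clean, the remainder is a one-variable convexity argument.
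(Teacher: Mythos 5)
Your proposal is correct and follows essentially the same route as the paper's proof: your $A^i$ and $B^i$ are exactly the paper's $f_1=\max_{\{i\}\subseteq\D\subseteq\A_k}J(\D)$ and $f_2=\max_{\D\subseteq\A_k\setminus\{i\}}J(\D)$, with the same ingredients --- piecewise linearity of the stop branch with kink at $\rho_i/(\rho_i+r_i)$, convexity of the continue branch via Lemma~\ref{lemma_convexity_V_i}, and dominance of stopping at $\omega_i\in\{0,1\}$. Your treatment of the endpoint dominance and of the sublevel-set/interval argument is somewhat more explicit than the paper's, but the decomposition and the key convexity reasoning coincide.
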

\begin{proof}
Following the same approach we used to show the convexity of $V(\bomega,\A_k,k)$ in Appendix \ref{app:proof_convexity_V_i}, we rewrite \eqref{eq:value_function_exp} for an arbitrary resource $i$ as follows:
\be\label{eq:rewrite_value_function}
\!\!V(\bomega,\A_k,k)\!=\!\max\left\{\max\limits_{\{i\}\subseteq\D\subseteq\A_k}J(\D),\max_{\D\subseteq\A_k\setminus\{i\}}J(\D)\right\}
\ee
where the function $J(\D)$ has been defined in \eqref{eq:set_function}, and then we refer to the two terms of the outer  maximization as $f_1=\max\limits_{\{i\}\subseteq\D\subseteq\A_k}J(\D)$ and $f_2=\max_{\D\subseteq\A_k\setminus\{i\}}J(\D)$.
Note that $f_1$ corresponds to the maximum value from the actions that immediately decide on resource $i$, whereas $f_2$ indicates the maximum value from the actions that do not decide on resource $i$ at time $k$.
The convexity of $f_1$ and $f_2$ is proven in Appendix \ref{app:proof_convexity_V_i}. 
Clearly $f_1\geq f_2$ for $\omega_i=0$ and for  $\omega_i=1$ since, when the state of the channel is known, is clearly preferable (or equivalent) to immediately decide on it, and also $f_1$ is a piece-wise linear function of $\omega_i$ with only two segments that intersect in $\omega_i=\frac{\rho_i}{\rho_i+r_i}$ (see definition of $V_d^i(\omega_i)$ in \eqref{eq:V_d_def_no_delta}).
Hence, there are  only two possibilities:
\begin{itemize}
\item  $f_1<f_2$ for $\nu_1^{i}(\A_k,k)<\omega_i[k]<\nu_0^{i}(\A_k,k)$ where \\$0\leq\nu_1^{i}(\A_k,k)\leq\frac{\rho_i}{\rho_i+r_i}\leq\nu_0^{i}(\A_k,k)$;\vspace{0.2cm}
\item $f_1\geq f_2$ $\forall \omega_i\in [0,1]$.
\end{itemize}
In this second case, we will say the two functions do not intersect, i.e. there is no region where the decision maker should prefer to not include the resource in $\D$; therefore we set $\nu_1^{i}(\A_k,k)=\nu_0^{i}(\A_k,k)=\frac{\rho_i}{\rho_i+r_i}$ to indicate the decision maker should immediately decide on resource $i$ and this concludes the proof.
\end{proof}
\begin{figure}[ht]
\begin{center}
\includegraphics[scale=0.35]{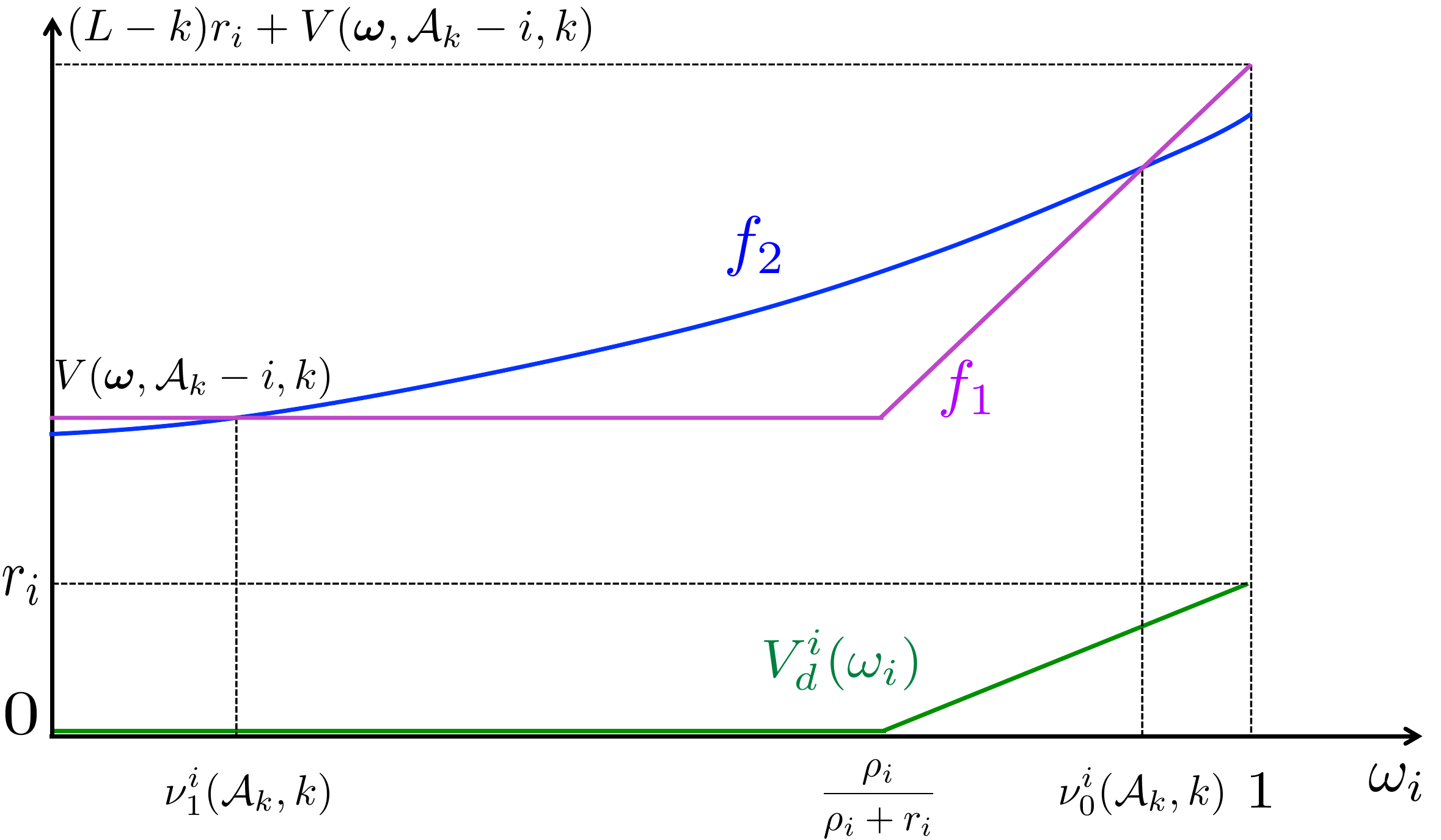}
\caption{Representation of the functions $f_1$ and $f_2$ of $\omega_i$ from \eqref{eq:rewrite_value_function} at time $k$}\label{fig:convexity_proof}
\end{center}
\end{figure}
%\lorenzo{not really sure if we should include this proof here or in the appendix. It might be too long but maybe meaningful to understand the heuristic and the decision region idea, even though we could always refer to the Appendix and the proof when we introduce the heuristic. Any suggestions?}.
%\subsection{Optimal Stopping $\btau^*$}
%In light of Lemma \ref{lemma_convexity_V_i} we can establish: 
It is not entirely surprising that the problem we defined leads to an optimal policy with a two thresholds structure, since analogous policies have been found to be optimal for a general truncated SPRT \cite{wald1948} with a deadline constraint either deterministic \cite{book_seq_hyp_test} or stochastic \cite{Frazier08sequentialhypothesis} and in general many different sequential decisions schemes.
However, as described in the Introduction, our formulation is different in light of the time dependence of the Bayesian term, i.e. the different utility that a decision over a certain resource can produce at different times.
\par Theorem \ref{th:optimal_policy_structure} is a description of the optimal stopping rules but does not indicate how to compute the two thresholds nor how to select the resource $\phi^*_k$ to be sensed at time $k$.
Due to the recursive nature of the function $V_t^i(\boldsymbol{\omega},\A_{k+1},k)$ in \eqref{eq:V_t_def} and the dependence on the rest of the system, the exact computation of these optimal thresholds remains elusive.
In order to provide a suboptimal strategy that is computationally manageable in Section \ref{subsec:heuristic}, let us look at the situation where we have only one resource and we have to choose between testing the resource or taking a final decision, either $\delta_i=1$ or $\delta_i=0$. 
We will refer to the two single-resource decision thresholds as $\nu_1^i[k]=\nu_1^i(i,k),~\nu_0^i[k]=\nu_0^i(i,k)$, where we introduce this short notation for convenience.
We then introduce the following lemmas.
\begin{lemma}\label{lemma_time_variant_threshold}
$\forall i\in\N,~k=0,1,\dots,L-2$ the two single-resource thresholds will monotonically contract, i.e. 
\begin{align}
\nu_1^i[k]&\leq\nu_1^i[k+1]\label{eq:nu_1_k}\\
\nu_0^i[k]&\geq\nu_0^i[k+1]\label{eq:nu_0_k}
\end{align}
\end{lemma}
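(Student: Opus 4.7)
The plan is to recast the threshold monotonicity as a statement about a single scalar function and then reduce the whole lemma to a short induction on the value function. Since we are in the single-resource case, I will write $m \triangleq L-k$ for the remaining horizon and abbreviate $V^{(m)}(\omega) \triangleq V(\omega,\{i\},L-m)$ and $V_t^{(m)}(\omega) \triangleq V_t^i(\omega,\{i\},L-m)$, so that $V^{(m)}(\omega) = \max\{m V_d^i(\omega),\, V_t^{(m)}(\omega)\}$ with the convention $V^{(0)}\equiv 0$. By Theorem \ref{th:optimal_policy_structure}, the sensing (continuation) interval $(\nu_1^i[k],\nu_0^i[k])$ is exactly $\{\omega:\,G^{(m)}(\omega)>0\}$ with $G^{(m)}(\omega)\triangleq V_t^{(m)}(\omega) - m V_d^i(\omega)$. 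Proving \eqref{eq:nu_1_k}--\eqref{eq:nu_0_k} is therefore equivalent to proving $G^{(m+1)}(\omega) \geq G^{(m)}(\omega)$ for every $\omega\in[0,1]$ and every $m\geq 1$, i.e., that the sensing region enlarges as the horizon grows (contracts as $k$ advances).

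Next I would unfold this inequality using the recursion in \eqref{eq:V_t_def}, obtaining $G^{(m+1)}(\omega) - G^{(m)}(\omega) = \mathds{E}_o[V^{(m)}(\omega') - V^{(m-1)}(\omega')] - V_d^i(\omega)$ with $\omega'=\Pi_i(\omega,o,i)$. A direct calculation from the Bayes update shows that the belief sequence is a martingale, $\mathds{E}[\omega']=\omega$, so since $V_d^i(\omega)=\max\{(r_i+\rho_i)\omega - \rho_i,0\}$ is convex, Jensen gives $\mathds{E}[V_d^i(\omega')]\geq V_d^i(\omega)$. Hence the whole lemma reduces to the single sub-claim $V^{(m)}(\omega) - V^{(m-1)}(\omega) \geq V_d^i(\omega)$ for all $\omega$ and $m\geq 1$; intuitively, one extra time slot is worth at least the one-step exploitation value of the current belief.

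I plan to prove the sub-claim by induction on $m$. The base $m=1$ is immediate since $V_t^{(1)}=-c\leq 0\leq V_d^i$, giving $V^{(1)}-V^{(0)} = V_d^i$. For the induction step I would lower-bound $V^{(m+1)}$ in two complementary ways: (a) the stop-now bound $V^{(m+1)}\geq (m+1)V_d^i(\omega) = m V_d^i(\omega) + V_d^i(\omega)$; and (b) the sense-once-more bound $V^{(m+1)} \geq V_t^{(m+1)} = -c + \mathds{E}[V^{(m)}(\omega')] \geq -c + \mathds{E}[V^{(m-1)}(\omega')] + \mathds{E}[V_d^i(\omega')] \geq V_t^{(m)}(\omega) + V_d^i(\omega)$, where the second step uses the inductive hypothesis applied pointwise at $\omega'$ and the third uses Jensen on $V_d^i$. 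Taking the maximum of the two bounds yields $V^{(m+1)}(\omega) \geq V_d^i(\omega) + \max\{m V_d^i(\omega), V_t^{(m)}(\omega)\} = V^{(m)}(\omega) + V_d^i(\omega)$, closing the induction.

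The delicate point, and the main obstacle I anticipate, is this combination step in the induction: a naive case split on whether $V^{(m)}$ equals $m V_d^i$ or $V_t^{(m)}$ fails because in the ``continue'' case one must compare $V_t^{(m+1)}$ to $V_t^{(m)}$, and that comparison is not directly monotone. The trick is to produce both lower bounds on $V^{(m+1)}$ separately and only then take the $\max$, so that the maximum reconstructs exactly $V^{(m)} + V_d^i$; this is also why the stronger inductive hypothesis $V^{(m)}-V^{(m-1)}\geq V_d^i$, rather than mere monotonicity $V^{(m)}\geq V^{(m-1)}$, is needed to push the induction through.
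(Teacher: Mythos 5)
Your proof is correct and is essentially the paper's own argument: Appendix \ref{app:time_variant_threshold} establishes the same key one-step inequality (Lemma \ref{lemma_V_k_V_km1}, $V_t^i(\omega_i,i,k)\geq V_t^i(\omega_i,i,k+1)+V_d^i(\omega_i)$) by the same backward induction that combines Jensen's inequality on the convex $V_d^i$ with the martingale property of the belief update, and then deduces the nesting of the stopping regions exactly as you do. Your induction hypothesis on the full value function, $V^{(m)}-V^{(m-1)}\geq V_d^i(\omega)$, is just the integrand-level form of the paper's inductive step, so the two proofs coincide in substance.
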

\begin{proof}
The intuition behind this lemma is that, as the decision deadline approaches, the urge to decide whether to utilize or discard the resource increases and the decision thresholds shrink. For a rigorous proof see Appendix \ref{app:time_variant_threshold}.
\end{proof}
\begin{lemma}\label{lemma_threshold_inequalities}
$\forall i\in\N,~k=0,1,\dots,L-1,~\A'\in 2^{\N-i}$, the following inequalities hold 
\begin{align}
\nu_1^i[k]&\leq\nu_1^i(\A'+i,k)
\label{eq:threshold_inequality_1}\\
\nu_0^i[k]&\geq\nu_0^i(\A'+i,k)\label{eq:threshold_inequality_0}
\end{align}
\end{lemma}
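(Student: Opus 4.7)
The strategy is to first establish a pointwise subadditivity of the value function,
\begin{equation}\label{eq:subadd_plan}
V(\bomega,\A,k)\leq V(\bomega_{-i},\A\setminus\{i\},k)+V(\omega_i,\{i\},k),\quad\forall i\in\A,
\end{equation}
where $\bomega_{-i}$ is the belief vector after removing the $i$-th coordinate, and then to deduce both threshold inequalities from \eqref{eq:subadd_plan}. The intuition is that solving two independent sub-problems in parallel---one with resource $i$ alone, one with $\A\setminus\{i\}$---cannot be worse than the joint problem, which is forced to share a single sensing slot at each instant.

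From \eqref{eq:subadd_plan} the lemma is immediate. Following the proof of Theorem~\ref{th:optimal_policy_structure} with $\A=\A'+i$, write $V(\bomega,\A'+i,k)=\max\{f_1^{(b)},f_2^{(b)}\}$ where $f_1^{(b)}(\omega_i)=(L-k)V_d^i(\omega_i)+V(\bomega_{-i},\A',k)$. Whenever $\omega_i\notin(\nu_1^i[k],\nu_0^i[k])$ we have $V(\omega_i,\{i\},k)=(L-k)V_d^i(\omega_i)$, and \eqref{eq:subadd_plan} gives $V(\bomega,\A'+i,k)\leq f_1^{(b)}(\omega_i)$. Combined with the trivial bound $V\geq f_1^{(b)}$, this forces equality, hence $f_1^{(b)}\geq f_2^{(b)}$ and $\omega_i\notin(\nu_1^i(\A'+i,k),\nu_0^i(\A'+i,k))$. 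In other words, the multi-resource continuation region is contained in the single-resource one, which is exactly \eqref{eq:threshold_inequality_1}--\eqref{eq:threshold_inequality_0}.

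The subadditivity \eqref{eq:subadd_plan} itself is proved by backward induction on $k$, the base case $k=L$ being trivial since $V\equiv 0$. A short preliminary backward induction based on Lemma~\ref{lemma_convexity_V_i} and Jensen's inequality (together with $V_d\geq 0$) yields the auxiliary time-monotonicity $V(\bomega,\A,k)\geq V(\bomega,\A,k+1)$, used repeatedly below. For the inductive step of \eqref{eq:subadd_plan} at level $k$, let $(\D^\star,j^\star)$ attain the maximum in \eqref{eq:value_function_exp}: (i) if $i\in\D^\star$, the bound is immediate from the additivity of $V_d$ and $(L-k)V_d^i(\omega_i)\leq V(\omega_i,\{i\},k)$; (ii) if $i\notin\D^\star$ and $j^\star\neq i$, the inductive hypothesis applied to the post-observation state at $k+1$ gives $V_t^{j^\star}(\bomega,(\A'\setminus\D^\star)+i,k)\leq V_t^{j^\star}(\bomega_{-i},\A'\setminus\D^\star,k)+V(\omega_i,\{i\},k+1)$, and the feasibility bound $V(\bomega_{-i},\A',k)\geq (L-k)V_d(\bomega_{-i},\D^\star)+V_t^{j^\star}(\bomega_{-i},\A'\setminus\D^\star,k)$ together with time-monotonicity of $V(\omega_i,\{i\},\cdot)$ close the argument; (iii) if $i\notin\D^\star$ and $j^\star=i$, the inductive hypothesis splits the integrand of $V_t^i(\bomega,(\A'\setminus\D^\star)+i,k)$ into an $\omega_i$-only piece that recombines with $-c$ into $V_t^i(\omega_i,\{i\},k)\leq V(\omega_i,\{i\},k)$ and a $\bomega_{-i}$-only piece $V(\bomega_{-i},\A'\setminus\D^\star,k+1)$ that, after time-monotonicity, is absorbed via the feasibility bound $V(\bomega_{-i},\A',k)\geq (L-k)V_d(\bomega_{-i},\D^\star)+V(\bomega_{-i},\A'\setminus\D^\star,k)$. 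The main obstacle is case (iii): a continuation value at $k+1$ must be lifted back to $k$ via the auxiliary time-monotonicity, and one must check that the sensing cost $c$ inside $V_t^i(\omega_i,\{i\},k)$ exactly matches, and therefore cancels, the $-c$ emerging from the expansion of $V_t^i(\bomega,(\A'\setminus\D^\star)+i,k)$, so that no spurious $c$ remains in the final bound.
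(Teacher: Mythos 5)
Your proposal is correct and follows essentially the same route as the paper: your subadditivity inequality \eqref{eq:subadd_plan} is precisely the paper's auxiliary Lemma~\ref{lemma_submodular_value_function}, proved there by the same backward induction with the same three-case split (resource $i$ decided now, sensed now, or neither), and the deduction of the threshold inequalities via ``stopping on $i$ is optimal in the joint problem whenever the single-resource rule stops'' matches the paper's closing chain of inequalities. The only presentational difference is that you state the time-monotonicity $V(\bomega,\A,k)\geq V(\bomega,\A,k+1)$ as an explicit auxiliary fact, which the paper uses implicitly in its case~2.
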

\begin{proof}
The intuition behind this lemma is that adding resources to the state $\A$ produces a similar effect to removing time dedicated to each resource. Thus a similar intuition as for Lemma \ref{lemma_time_variant_threshold} applies. The full proof is in Appendix \ref{app:threshold_inequalities}.
\end{proof}
\begin{lemma}\label{lemma_threshold_heuristic}
$\forall k,k'=0,\dots,L-1,~k'\geq k,~\A\in 2^{\N},i\in\A$
\begin{align}
{\nu}_1^i[k]\leq\nu_1^i(\A,k')\label{eq:threshold_inequality_nu_1}\\
{\nu}_0^i[k]\geq\nu_0^i(\A,k')\label{eq:threshold_inequality_nu_0}
\end{align}
\end{lemma}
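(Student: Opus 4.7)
The proof is essentially a transitivity argument that chains the two previous lemmas. The plan is to introduce the intermediate quantity $\nu_1^i[k']$ (respectively $\nu_0^i[k']$), the single-resource threshold evaluated at the later time $k'$, and use it as a pivot between the stated endpoints.

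First I would fix $i$, $k$, $k'$ with $k' \geq k$, and $\mathcal{A} \ni i$. Since the single-resource thresholds depend monotonically on time, Lemma \ref{lemma_time_variant_threshold} applied iteratively from $k$ to $k'$ yields $\nu_1^i[k] \leq \nu_1^i[k+1] \leq \cdots \leq \nu_1^i[k']$ and, by the symmetric statement, $\nu_0^i[k] \geq \nu_0^i[k']$. This handles the dependence on the time index in isolation.

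Next I would set $\mathcal{A}' \triangleq \mathcal{A} \setminus \{i\} \in 2^{\mathcal{N} - i}$, so that $\mathcal{A} = \mathcal{A}' + i$. Applying Lemma \ref{lemma_threshold_inequalities} at time $k'$ with this choice of $\mathcal{A}'$ gives $\nu_1^i[k'] \leq \nu_1^i(\mathcal{A}, k')$ and $\nu_0^i[k'] \geq \nu_0^i(\mathcal{A}, k')$, which captures the effect of enlarging the set of pending resources from $\{i\}$ to $\mathcal{A}$ while keeping the time fixed.

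Finally, chaining the two bounds delivers the claimed inequalities
\begin{align}
\nu_1^i[k] \leq \nu_1^i[k'] \leq \nu_1^i(\mathcal{A}, k'), \qquad \nu_0^i[k] \geq \nu_0^i[k'] \geq \nu_0^i(\mathcal{A}, k'),
\end{align}
which completes the proof. There is no real obstacle here since the heavy lifting — the monotonicity in time and the monotonicity in the set of pending resources — has already been carried out in Lemmas \ref{lemma_time_variant_threshold} and \ref{lemma_threshold_inequalities}; this lemma merely repackages them as a joint statement that will be convenient for constructing the threshold approximation of Algorithm \ref{alg:threshold_approximation}.
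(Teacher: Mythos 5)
Your proof is correct and follows essentially the same route as the paper, which simply cites Lemmas \ref{lemma_time_variant_threshold} and \ref{lemma_threshold_inequalities} with $\A=\A'+i$; you make the chaining explicit (time monotonicity first, at the single-resource level, then the set inequality at the later time $k'$), which is indeed the only valid order since Lemma \ref{lemma_time_variant_threshold} applies only to single-resource thresholds.
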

\begin{proof}
Follows from Lemmas \ref{lemma_time_variant_threshold}-\ref{lemma_threshold_inequalities} and setting $\A=\A'+i$.
\end{proof}
Lemma \ref{lemma_threshold_heuristic} gives us insight on the behavior of the thresholds $\left(\nu_1^{i}(\A_k,k),\nu_0^{i}(\A_k,k)\right)$ that motivates our heuristic strategy in Section \ref{subsec:heuristic}, which uses a single-channel thresholds approximation to replace the optimal thresholds.
In the next section we will first discuss how to approximate the thresholds $\nu_1^i[k]$, $\nu_0^i[k]$ and then we will introduce a manageable strategy to sense the resources and decide which action to take based on the approximate decision thresholds.

\subsection{A Low Complexity Approximation Algorithm}\label{subsec:heuristic}
For the rest of this subsection, we will use underline to indicate a lower bound and overline for upper bound of a given quantity or function, i.e. $\underline{a}\leq a \leq \overline{a}$. 
For our heuristic, as we will motivate in the following, we will use the bounds $\underline{\nu_1^i}[k]$ and $\overline{\nu_0^i}[k]$ that correspond to the single-resource decision thresholds, derived from the formulation previously introduced. 
In principle, the tighter are the bounds we can provide for the thresholds, the better our heuristic will perform.
Let us start by finding simple bounds.
The decision thresholds $(\nu_1^i[k],\nu_0^i[k])$ at  time $k$  are the solutions with respect to $\omega_i$ of 
\begin{equation}\label{eq:intersection}
(L-k)V_d^i(\omega_i)=V_t^i(\omega_i,i,k).
\end{equation}   
From the convexity of $V_t^i(\omega_i,i,k)$ (see Lemma \ref{lemma_convexity_V_i}) and the values of the function for $\omega_i=0,1$ we know that (see Fig.\ref{fig:value_single})
\begin{equation}\label{eq:upper_bound_V_t_convexity}
V_t^i(\omega_i,i,k)\leq -c+(L-k-1)r_i\omega_i. 
\end{equation}
\begin{figure}[ht]
\begin{center}
\includegraphics[scale=0.35]{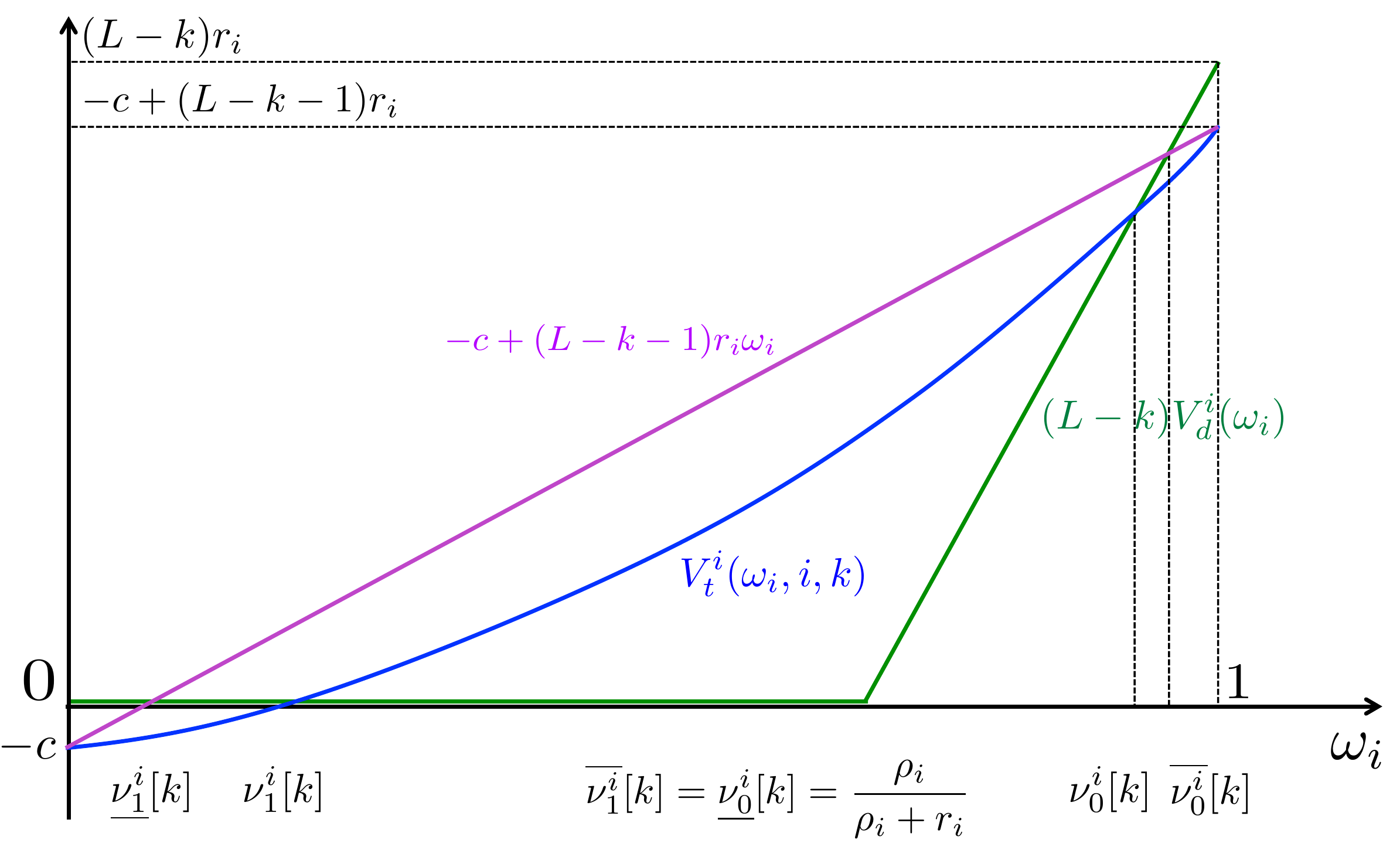}
\caption{Representation of the two functions of $\omega_i$:  $V_t^i(\bomega,i,k)$ associated to the test of resource $i$ and $(L-k)V_d^i(\omega_i)$ for the immediate decision on resource $i$.}\label{fig:value_single}
\end{center}
\vspace{-0.3cm}
\end{figure}
Furthermore, $V_d^i(\omega_i)$ is piece-wise linear with only two segments that intersect in $\omega_i=\frac{\rho_i}{\rho_i+r_i}$. 
\begin{remark}
{ {\it A more general formulation with $4$ different rewards/penalties for \eqref{eq:utility_single} would not alter the structure of the problem nor invalidate our results.
Our definition has been motivated by the emphasis we want to put on the utilization of the resources, which we assume occurs only if the detected state is $0$, but if one had defined 
\begin{equation}\label{eq:rif}
R_i(s_i;\delta_i)\triangleq\begin{cases}C^i_{00}>0&\mbox{if}~s_i=0,\delta_i=0\\
C^i_{10}<0&\mbox{if}~s_i=0,\delta_i=1\\
C^i_{11}>0&\mbox{if}~s_i=1,\delta_i=1\\
C^i_{01}<0&\mbox{if}~s_i=1,\delta_i=0\end{cases}
\end{equation} 
then with some manipulations, one can find the strategy would have the same structure, where Fig.\ref{fig:value_single} would be ``rotated".}}
\end{remark}

By intersecting the upper bound for $V_t^i(\omega_i,i,k)$ in \eqref{eq:upper_bound_V_t_convexity}
with $(L-k)V_d^i(\omega_i)$ it follows:
\begin{align}
\underline{\nu_1^i}[k]&=\min\left\{\dfrac{c}{(L-k-1)r_i},\dfrac{\rho_i}{\rho_i+r_i}\right\}\label{eq:easy_lb_nu_1}\\ 
\overline{\nu_0^i}[k]&=\max\left\{\dfrac{(L-k)\rho_i-c}{(L-k)\rho_i+r_i},\dfrac{\rho_i}{\rho_i+r_i}\right\}\label{eq:easy_ub_nu_0}\\
\overline{\nu_1^i}[k]&=\underline{{\nu_0^i}}[k]=\dfrac{\rho_i}{\rho_i+r_i}\label{eq:easy_central_bound_nu}
\end{align}
where we consider $\nu_1^{i}[k]=\nu_0^{i}[k]=\frac{\rho_i}{\rho_i+r_i}$ if \eqref{eq:intersection} has no solutions for $\omega_i\in[0,1]$.
The bounds above can be very loose depending on the parameters of our problem and lead to a poor approximation.
Potentially tighter upper and lower bounds than the ones directly obtainable from \eqref{eq:easy_lb_nu_1}-\eqref{eq:easy_ub_nu_0}-\eqref{eq:easy_central_bound_nu} can be found using a probabilistic approach to write the value function of our dynamic program.
The general idea is to consider a worst and a best case scenario any time the belief over a certain resource has not crossed one of the two thresholds. 
This is possible since the belief update (for a given observation) and the value function are monotonically increasing functions of $\omega_i$ (details are reported in Appendix \ref{app:threshold_approximation}).
Let us define the following functions for $s=0,1$%\vspace{-0.6cm}
\begin{align}
\!\!\!\bar{F}_{\omega_i[k+\ell]}^{\ell}(\omega|\varphi,s)&\!\triangleq\!P\!\left\{\omega_i[k+\ell]\geq\omega|\omega_i[k]\!=\!\varphi\!\cap\!s_i\!=\!s\right\}\label{eq:comple_CDF}\\
\overline{\mu_i}[k|s]&\triangleq \nonumber\bar{F}_{\omega_i[k+1]}^{1}(\underline{\nu_1^i}[k]|\overline{\nu_0^i}[k-1],s)\\&-\bar{F}_{\omega_i[k+1]}^{1}(\overline{\nu_0^i}[k]|\underline{\nu_1^i}[k-1],s)\label{eq:mu_upp}\\
\underline{\mu_i}[k|s]&\triangleq\nonumber\bar{F}_{\omega_i[k+1]}^{1}(\overline{\nu_1^i}[k]|\underline{\nu_1^i}[k-1],s)\\&-\bar{F}_{\omega_i[k+1]}^{1}(\underline{\nu_0^i}[k]|\overline{\nu_0^i}[k-1],s)\label{eq:mu_und}
\end{align}
where \eqref{eq:comple_CDF} is the complementary CDF of the updated belief, which is a random variable while \eqref{eq:mu_upp}-\eqref{eq:mu_und} refer to the probability of not overcoming the belief thresholds.

We will then prove
\begin{lemma}\label{lemma_threshold_approximation}
{\it $\forall i\in\N,k=0,1,\dots,L-1$ we can derive the following upper and lower bound for the function $V_t^i(\omega_i,i,k)$
\begin{align}
&\overline{V_t^i}(\omega_i,i,k)=-c\nonumber\\
&+\min\left\{\sum_{\ell=k}^{L-2}\left[\omega_i\hspace{-0.2cm}\prod\limits_{m=k+1}^{\ell}\hspace{-0.2cm}\overline{\mu_i}[m|0]+(1-\omega_i)\hspace{-0.2cm}\prod\limits_{m=k+1}^{\ell}\hspace{-0.2cm}\overline{\mu_i}[m|1]\right]\right.\nonumber\\
&\max\hspace{-0.1cm}\left\{\hspace{-0.1cm}-u[\ell-k]c+(L-\ell-1)\left[\overline{\nu_0^i}[\ell] r_i \bar{F}^1_i\hspace{-0.1cm}\left(\underline{\nu_0^i}[\ell+1]|\overline{\nu_0^i}[\ell],0\right)\right.\right.
\nonumber\\
&\left.\left.\left.\hspace{-0.1cm}-(1\hspace{-0.1cm}-\overline{\nu_0^i}[\ell])\rho_i\bar{F}_i^1\hspace{-0.1cm}\left(\overline{\nu_0^i}[\ell+1]|\underline{\nu_1^i}[\ell],1\right)\right],0\right\},(L-k-1)r_i\omega_i\right\}\label{eq:upper_bound_V_t}
\end{align}
\begin{align}
&\underline{V_t^i}(\omega_i,i,k)=\hspace{-0.1cm}-c+\hspace{-0.1cm}\sum_{\ell=k}^{L-2}\left[\omega_i\hspace{-0.3cm}\prod\limits_{m=k+1}^{\ell}\hspace{-0.3cm}\underline{\mu_i}[m|0]+(1-\omega_i)\hspace{-0.3cm}\prod\limits_{m=k+1}^{\ell}\hspace{-0.3cm}\underline{\mu_i}[m|1]\right]\nonumber\\
&\max\hspace{-0.1cm}\left\{\hspace{-0.1cm}-u[\ell-k]c+(L-\ell-1)\left[\underline{\nu_1^i}[\ell] r_i \bar{F}^1_i\hspace{-0.1cm}\left(\overline{\nu_0^i}[\ell+1]|\underline{\nu_1^i}[\ell],0\right)\right.\right.\nonumber\\
&\left.-(1-\underline{\nu_1^i}[\ell])\rho_i\bar{F}_i^1\left(\underline{\nu_0^i}[\ell+1]|\overline{\nu_0^i}[\ell],1\right),0\right\}\label{eq:lower_bound_V_t}
\end{align}
where we used the step function $u[k]=1$ for $k>0$ and $u[0]=0$ and the short notation $\bar{F}_i^1(\omega|\varphi,s)$ for $\bar{F}_{\omega_i[\ell+1]}^1(\omega|\varphi,s)$}.~\\
\end{lemma}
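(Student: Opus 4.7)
The plan is to derive both bounds by unfolding the Bellman equation \eqref{eq:V_t_def} for $V_t^i$ into a finite-horizon rollout, and then substituting worst- and best-case surrogates for the unknown optimal thresholds $\nu_1^i[\cdot],\nu_0^i[\cdot]$ and for the conditional law of the belief at stopping. I would write the single-resource value as an expectation over the sample-path-defined stopping time $\tau$---the first $m>k$ at which $\omega_i[m]$ leaves $(\nu_1^i[m],\nu_0^i[m])$---giving
\begin{equation*}
V_t^i(\omega_i,i,k)=-c+\mathds{E}\!\left[-c(\tau-k-1)+(L-\tau)V_d^i(\omega_i[\tau])\,\Big|\,\omega_i[k]=\omega_i\right],
\end{equation*}
with $k+1\leq\tau\leq L-1$. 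Splitting the expectation over the possible first-exit epochs $\ell+1\in\{k+1,\ldots,L-1\}$ and using $\mathds{E}[\cdot]=\omega_i\,\mathds{E}[\cdot|s_i=0]+(1-\omega_i)\,\mathds{E}[\cdot|s_i=1]$ yields a sum whose $\ell$-th term is the probability of continuing through time $\ell$ multiplied by the stopping payoff $-u[\ell-k]c+(L-\ell-1)V_d^i(\omega_i[\ell+1])$, to be bounded term by term.

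For the upper bound $\overline{V_t^i}$ in \eqref{eq:upper_bound_V_t} I would: (a) upper bound the conditional continuation probability $P(\tau>\ell|s_i=s)$ by the product $\prod_{m=k+1}^{\ell}\overline{\mu_i}[m|s]$, obtained by widening each continuation interval to $(\underline{\nu_1^i}[m],\overline{\nu_0^i}[m])$ and pushing the conditioning belief at step $m-1$ to its most extreme admissible value---this uses the monotonicity of $\bar F^1(\omega|\varphi,s)$ in $\varphi$ (equivalently, of the Bayes update $\Pi_i$ in its first argument) together with the nested-interval structure of \eqref{eq:easy_lb_nu_1}--\eqref{eq:easy_central_bound_nu} via Lemmas \ref{lemma_time_variant_threshold}--\ref{lemma_threshold_heuristic}; (b) upper bound the stopping payoff by retaining only the contribution from an upper-threshold crossing (since a lower crossing yields $V_d^i=0$), evaluated at the worst-case belief $\overline{\nu_0^i}[\ell]$ with up-crossing probability $\bar F^1(\underline{\nu_0^i}[\ell+1]|\overline{\nu_0^i}[\ell],0)$, and symmetrically accounting for the $s_i=1$ penalty term; (c) clip each summand from below by $0$ with $\max\{\cdot,0\}$, since the decision maker may always stop and discard earlier; (d) intersect the whole rollout bound with the straight-line convex bound $(L-k-1)r_i\omega_i$ of \eqref{eq:upper_bound_V_t_convexity} via the outer $\min\{\cdot,\cdot\}$.

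The lower bound $\underline{V_t^i}$ in \eqref{eq:lower_bound_V_t} is obtained symmetrically: narrow each continuation interval to $(\overline{\nu_1^i}[m],\underline{\nu_0^i}[m])$ so that $\underline{\mu_i}[m|s]$ under-estimates the continuation probability, and evaluate the stopping payoff at the pessimistic inner value $\underline{\nu_1^i}[\ell]$; no intersection with the linear bound appears, since that line is an upper bound only. The $u[\ell-k]$ indicator in both formulas absorbs the boundary case $\ell=k$, in which no additional sensing cost is incurred beyond the single $-c$ already outside the sum. The main obstacle will be the sample-path bookkeeping of step (a): the product $\prod_m\overline{\mu_i}[m|s]$ is not literally a probability, so rigorously showing that it dominates (and $\prod_m\underline{\mu_i}[m|s]$ is dominated by) the true conditional continuation probability requires an inductive coupling argument that chains the monotonicity of $\bar F^1(\omega|\varphi,s)$ in both arguments with the monotone nesting of the threshold bounds. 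Once that coupling is in place, assembling the pieces into \eqref{eq:upper_bound_V_t}--\eqref{eq:lower_bound_V_t} is straightforward algebra.
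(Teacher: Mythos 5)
Your proposal follows essentially the same route as the paper's Appendix D proof: write $V_t^i$ as a first-exit rollout over the epochs at which the belief leaves the (exact) continuation region, split by conditioning on $s_i\in\{0,1\}$, bound the per-step non-crossing and crossing probabilities by pushing the conditioning belief to the extreme admissible threshold bounds (using monotonicity of $\bar F^1_i(\omega|\varphi,s)$ and the nesting from Lemmas \ref{lemma_time_variant_threshold}--\ref{lemma_threshold_heuristic}), chain these via the recursion to get the products $\prod_m\overline{\mu_i}[m|s]$ and $\prod_m\underline{\mu_i}[m|s]$, evaluate the stopping payoff at the worst/best-case beliefs $\overline{\nu_0^i}[\ell],\underline{\nu_1^i}[\ell]$, clip with $\max\{\cdot,0\}$, and intersect the upper bound with the convexity line $(L-k-1)r_i\omega_i$. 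The inductive bookkeeping you flag as the main obstacle is exactly what the paper carries out through its recursion on $P(\mathscr{C}_{k+1}^{\ell})$ and the interval-wise max/min bounds on $P_0(A_{\ell+1}|\cdot)$ and $P_1(A_{\ell+1}|\cdot)$, so your plan is sound and matches the paper.
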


\begin{proof}
See Appendix \ref{app:threshold_approximation}. Notice that, since we evaluate these bounds at time $k$ in $\omega_i$, the bounds for the thresholds at time $\ell=k$ can be replaced with $\omega_i$ (same holds for the bound in \eqref{eq:upper_bound_V_t_marketing} in Section \ref{sec:new_application}).
\end{proof}
Gathering these results, we can now present our thresholds approximation in Algorithm \ref{alg:threshold_approximation}. Note that our approach is valid irrespective of the sensing model. In fact, in order to express in closed form the bounds in \eqref{eq:upper_bound_V_t}-\eqref{eq:lower_bound_V_t} one only needs to evaluate the functions $\bar{F}^1_{\omega_i[k+1]}(\omega|\varphi,s)$ in \eqref{eq:comple_CDF}.
 
We highlight it is important to run the algorithm for decreasing value of $k$ in order to use potentially tighter bounds when we compute \eqref{eq:upper_bound_V_t}-\eqref{eq:lower_bound_V_t} for lower values of $k$.
The last step in the algorithm is a direct consequence of Lemma \ref{lemma_time_variant_threshold}.
In Section \ref{sec:simulation} we will numerically illustrate how the tightness of the bounds from Algorithm \ref{alg:threshold_approximation} is highly dependent on the value of the parameters of our utility function and on the quality of the test (i.e the sensing $SNR$ $\zeta_i$ for our application).
\begin{algorithm}[ht]
\SetAlgoLined~\\
$\forall i\in\N$\\
 1) set $\underline{\nu_1^i}[L-1]=\overline{\nu_1^i}[L-1]=\underline{\nu_0^i}[L-1]=\overline{\nu_0^i}[L-1]=\dfrac{\rho_i}{\rho_i+r_i}$\\~\\
2)~\For{$k=L-2:-1:0~~$} 
    {\begin{itemize}
    \item Compute $\overline{V_t^i}(\omega_i,i,k)$ from \eqref{eq:upper_bound_V_t}          and $\underline{V_t^i}(\omega_i,i,k)$\\ from \eqref{eq:lower_bound_V_t}
    \item  find the numerical solutions $\omega_1<\omega_2$ of 
\begin{equation*}
\overline{V_t^i}(\omega_i,i,k)=(L-k)V_d^i(\omega_i)
\end{equation*}
where $\omega_1=\omega_2=\frac{\rho_i}{\rho_i+r_i}$ if no solutions.\\
Set 
\begin{align*}
\underline{\nu_1^i}[k]&=\omega_1\\
\overline{\nu_0^i}[k]&=\omega_2
\end{align*}
    \item  find the two numerical solutions $\omega_1<\omega_2$ of 
\begin{equation*}
\underline{V_t^i}(\omega_i,i,k)=(L-k)V_d^i(\omega_i)
\end{equation*}
where $\omega_1=\omega_2=\frac{\rho_i}{\rho_i+r_i}$ if no solutions.\\
 Set 
 \begin{align*}
 \underline{\nu_1^i}[k]&=\min(\omega_1,\underline{\nu_1^i}[k+1])\\
 \overline{\nu_0^i}[k]&=\max(\omega_2,\overline{\nu_0^i}[k+1])
 \end{align*}
    \end{itemize} }
\caption{Thresholds Approximation Algorithm}\label{alg:threshold_approximation}
\end{algorithm}
Having introduced two methods to approximate the thresholds, we now introduce our low-complexity strategy.
The pseudocode for our heuristic is presented in Algorithm \ref{alg:heuristic}.
{Interestingly, the distance between the two bounds can be used to find an on-line (not computable {\it a priori}) non trivial upper-bound for the cumulative utility loss of Algorithm 2.
}
%First, we choose to approximate the single-channel thresholds $(\tilde{\nu}_1^{i}[k],\tilde{\nu}_0^{i}[k])$ with the following linear combination:
%\begin{align}
%\hat{\nu}_1^{i}[k]&=(1-\lambda_1^i[k])\underline{\nu_1^i}[k]+\lambda_1^i[k]\overline{\nu_1^i}[k]\label{eq:approx_nu_1}\\
%\hat{\nu}_0^{i}[k]&=\lambda_0^i[k]\underline{\nu_0^i}[k]+(1-\lambda_0^i[k])\overline{\nu_0^i}[k]\label{eq:approx_nu_0}
%\end{align}
%where we used as coefficients
%\begin{align} 
%\lambda_1^i[k]&=2^{-D(f_1^i||f_0^i)\log(L-k)}\\
%\lambda_0^i[k]&=2^{-D(f_0^i||f_1^i)\log(L-k)}
%\end{align}  
%The motivation for this choice is that, asymptotically, the two coefficients represents the two error probabilities for the hypothesis testing and the lower these get the higher the information we can acquire by sensing the channel, therefore the larger will be $V_t^i(\omega_i,i,k)$ and, consequently, the thresholds. The dependence on $L$ is motivated by the fact we can upperbound the growth rate of the expected time we will sense each channel by $\log L$ as we will formalize in the following Lemma.
\begin{lemma}\label{lemma_bound_expected_sensing_time}
{\it The expected sensing time for resource $i$, given we continue to sense it from time $k$ until we make a final decision at time $\tau_i$, can be upper-bounded as follows:
\begin{align}
&\mathds{E}[\tau_i-k|k]\leq\min\left\{\omega_i[k]\frac{\varsigma\left(\overline{\nu_0^i}[k],\omega_i[k]\right)+\hat{D}(f_0^i||f_1^i)}{D(f_0^i||f_1^i)}\right.\nonumber\\&\left.+(1-\omega_i[k])\frac{\varsigma\left(\omega_i[k],\underline{\nu_1^i}[k]\right)+\hat{D}(f_1^i||f_0^i)}{D(f_1^i||f_0^i)},L-k-1\right\}\label{eq:E_tau_i}
\end{align}
where $\varsigma(x,y)\triangleq\log\left(\frac{x(1-y)}{(1-x)y}\right)$ and   
\begin{align}
\hat{D}(f_0^i||f_1^i)&\triangleq\mathds{E}\left\{\log\left(\frac{f_0^i(o)}{f_1^i(o)}\right)\big|s_i=0,{f_1^i(o)}\leq{f_0^i(o)}\right\}\label{eq:def_hat_div_01}\\
\hat{D}(f_1^i||f_0^i)&\triangleq\mathds{E}\left\{\log\left(\frac{f_1^i(o)}{f_0^i(o)}\right)\big|s_i=1,{f_1^i(o)}\geq{f_0^i(o)}\right\}\label{eq:def_hat_div_10}
\end{align}
%The expected total sensing time $\tilde{\tau}=\max_i\tau_i$ for our heuristic approach can be then upper-bounded as:
%%\begin{align}
%%\mathds{E}[\bar{\tau}]\leq\sum_{i=1}^N\omega_i[0]&\dfrac{\log\left(L\right)+\log\left(\dfrac{\rho_i(1-\omega_i[0])}{(c+r_i)\omega_i[0]}\right)+\hat{D}(f_0^i||f_1^i)}{D\left(f_0^i||f_1^i\right)}\nonumber\\+(1-\omega_i[0])&\dfrac{\log\left(L\right)+\log\left(\dfrac{r_i\omega_i[0]}{c(1-\omega_i[0])}\right)+\hat{D}\left(f_1^i||f_0^i\right)}{D\left(f_1^i||f_0^i\right)}\label{eq:E_bar_tau}
%%\end{align}
%\begin{align}
%\mathds{E}[\tilde{\tau}-k|k]\leq\min\left\{\sum_{i\in\A_k}\overline{\mathds{E}}[\tau_i-k|k],L-k-1\right\}\label{eq:E_bar_tau}
%\end{align}
%where $\overline{\mathds{E}}[\tau_i-k|k]$ is the upper-bound given in \eqref{eq:E_tau_i}.
}
\end{lemma}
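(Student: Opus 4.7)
The natural plan is to pass to the log-likelihood ratio (LLR) representation of the belief and then invoke Wald's identity with a standard overshoot bound. First, setting $\Lambda_k\triangleq\log\frac{\omega_i[k]}{1-\omega_i[k]}$, the Bayes update \eqref{eq:belief_update} rearranges to $\Lambda_{k+1}=\Lambda_k+Z_k$ with $Z_k\triangleq\log\frac{f_0^i(o[k])}{f_1^i(o[k])}$, so conditionally on $s_i$ the log-odds process is an i.i.d.\ random walk with mean $D(f_0^i\|f_1^i)>0$ under $s_i=0$ and $-D(f_1^i\|f_0^i)<0$ under $s_i=1$. The heuristic's stopping rule $\omega_i[k]\notin(\underline{\nu_1^i}[k],\overline{\nu_0^i}[k])$ becomes the first exit of $\Lambda_k$ from $[B[k],A[k]]$, where $A[k]\triangleq\log\frac{\overline{\nu_0^i}[k]}{1-\overline{\nu_0^i}[k]}$ and $B[k]\triangleq\log\frac{\underline{\nu_1^i}[k]}{1-\underline{\nu_1^i}[k]}$; by Lemma~\ref{lemma_time_variant_threshold} these intervals are nested in $k$, so the stopping time is upper bounded by the first exit time from the fixed interval $[B[k],A[k]]$.

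Next I would apply Wald's identity conditionally on $s_i$: for $s\in\{0,1\}$,
\begin{equation*}
\mathds{E}[\Lambda_{\tau_i}-\Lambda_k\mid s_i=s,k]=\mathds{E}[Z\mid s_i=s]\cdot\mathds{E}[\tau_i-k\mid s_i=s,k],
\end{equation*}
which reduces the problem to upper bounding $\mathds{E}[\Lambda_{\tau_i}\mid s_i=s,k]$. Under $s_i=0$ the walk drifts upward: whenever it exits through the lower boundary, $\Lambda_{\tau_i}\leq B[k]\leq A[k]$, and whenever it exits through the upper boundary $\Lambda_{\tau_i}$ exceeds $A[k]$ by at most the size of the triggering positive increment. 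Because $\{Z\geq 0\}$ coincides with $\{f_1^i(o)\leq f_0^i(o)\}$, definition \eqref{eq:def_hat_div_01} identifies $\hat{D}(f_0^i\|f_1^i)$ with the conditional expectation of such a positive jump, yielding $\mathds{E}[\Lambda_{\tau_i}\mid s_i=0,k]\leq A[k]+\hat{D}(f_0^i\|f_1^i)$. Using $A[k]-\Lambda_k=\varsigma(\overline{\nu_0^i}[k],\omega_i[k])$ and dividing by $D(f_0^i\|f_1^i)$ produces the first summand. The mirror argument under $s_i=1$, with downward drift $-D(f_1^i\|f_0^i)$ and overshoot controlled by $\hat{D}(f_1^i\|f_0^i)$ through \eqref{eq:def_hat_div_10}, gives the second summand. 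Averaging over the prior $\{\omega_i[k],1-\omega_i[k]\}$ yields the first quantity inside the minimum; the alternative $L-k-1$ is a trivial deterministic upper bound since the constraint $\tilde{\tau}\leq L-1$ in \eqref{eq:opt_problem_initial_formulation} forces termination by time $L-1$.

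The main obstacle will be making the overshoot inequality $\mathds{E}[\Lambda_{\tau_i}\mid s_i=0,k]\leq A[k]+\hat{D}(f_0^i\|f_1^i)$ fully rigorous: since $\tau_i$ itself depends on the triggering increment $Z_{\tau_i-1}$, the bound does not follow immediately from the unconditional inequality $\mathds{E}[Z_+\mid s_i=0]\leq\hat{D}(f_0^i\|f_1^i)$ and typically calls for a ladder-height/Lorden-type estimate or for a stochastic-dominance argument in which the true walk is compared to one whose signed increments are replaced by their conditional means $\pm\hat{D}$. A subsidiary technical point is aligning the heuristic thresholds with the true optimal ones $\nu_1^i(\mathcal{A}_k,k),\nu_0^i(\mathcal{A}_k,k)$; Lemma~\ref{lemma_threshold_heuristic} guarantees the true decision region is contained in $(\underline{\nu_1^i}[k],\overline{\nu_0^i}[k])$, so passing to the wider region can only inflate the sensing time and the direction of the inequality is preserved.
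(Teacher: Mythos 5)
Your proposal takes essentially the same route as the paper's proof: rewrite the belief dynamics as a log-likelihood-ratio random walk, dominate the true stopping time by the exit time from a fixed interval obtained from the threshold bounds of Lemmas \ref{lemma_time_variant_threshold}--\ref{lemma_threshold_heuristic}, apply Wald's identity conditionally on $s_i$, control the boundary overshoot by the conditional mean increments $\hat{D}(f_0^i\|f_1^i)$ and $\hat{D}(f_1^i\|f_0^i)$, and take the minimum with the trivial horizon bound $L-k-1$. The overshoot step you flag as delicate is handled in the paper by exactly the same bound (the last increment's expectation conditioned on its sign), so your argument is correct and matches the paper's, being if anything more explicit about that technical point.
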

\begin{proof}
See Appendix \ref{app:proof_bound_expected_sensing_time}
\end{proof}
The reason for immediately deciding on the resources whose belief has crossed one of the two single-channel thresholds  follows from Lemma \ref{lemma_threshold_heuristic}, which implies that at any time instant $k'$ after $k$ the decision maker will always decide not to test resource $i$ adding $kV_d^i(\omega_i)$ to the overall utility.
Therefore postponing that decision after $k$ will give a lower or equal expected utility. 
In other words, if $V_d^i(\omega_i)>0$ and we expect to select to exploit resource $i$ at some point, we will have a lower utility postponing that decision; if, instead, $V_d^i(\omega_i)=0$ and we discard channel $i$ we accrue zero  utility irrespective on when we take that action.
Using larger thresholds than the optimal ones appears a legitimate choice for applications where the accuracy of the test has a larger impact on the performances than the time devoted to sensing.

We then approximate the optimal selection rule $\bphi^*$ with 
the index $\frac{\omega_i[k]r_i}{\mathds{E}[\tau_i-k|k]}$ which is computed by using the upper-bound introduced in Lemma \ref{lemma_bound_expected_sensing_time}.
The motivation for the index and its asymptotic optimality are discussed in Appendix \ref{app:index_motivation}.
%By recalling the functions $\bar{F}^{\ell}_{\omega_i[k+\ell]}(\omega|\varphi,s)$ defined in \eqref{eq:comple_CDF} we can formally write the index $\Lambda(i,k)$ in \eqref{eq:Lambda_def}, where $\bp^i_k(s)=\delta[s]\omega_i[k]+\delta[s-1](1-\omega_i[k])$, so that the denominator represents the probability of crossing one of the two thresholds, i.e. reaching a final decision over resource $i$ after the next test.
%\begin{figure*}[!th]
%% ensure that we have normalsize text
%\normalsize
%% Store the current equation number.
%\setcounter{mytempeqncnt}{\value{equation}}
%\setcounter{equation}{40}
%\be\label{eq:Lambda_def}
%\Lambda(i,k)=\dfrac{\omega_i[k]r_i\bar{F}^1_{\omega_i[k+1]}\left(\hat{\nu}_0^i[k+1]|\omega_i[k],0\right)-(1-\omega_i[k])\rho_i\bar{F}^1_{\omega_i[k+1]}\left(\hat{\nu}_0^i[k+1]|\omega_i[k],1\right)}{\sum\limits_{s=0}^1\bp^i_k(s)\left[\bar{F}^1_{\omega_i[k+1]}\left(\hat{\nu}_0^i[k+1]|\omega_i[k],s\right)+1-\bar{F}^1_{\omega_i[k+1]}\left(\hat{\nu}_1^i[k+1]|\omega_i[k],s\right)\right]}
%\ee
%\setcounter{equation}{41}
%% IEEE uses as a separator
%\hrulefill
%% The spacer can be tweaked to stop underfull vboxes.
%%\vspace*{2pt}
%\end{figure*}

%The motivation for this index is to prioritize resources that can give higher additional utility but require more time (i.e. lower probability to reach a decision after one test) for the sensing.
\begin{algorithm}[ht]
\SetAlgoLined
 $k=0$,
 $\A_k=\N$\;
 \While{$k<L$ \textbf{and} $\A_k\neq\emptyset$}
 {  
 $\D=\emptyset$\;
  1) Search for channels $i$ such that $\omega_i[k]<\underline{{\nu}_1^{i}}[k] \vee \omega_i[k]>\overline{{\nu}_0^{i}}[k]$\;  
  \If{there are such channels}
  {
  \For{every channel $i$ that has been found}
   {
    $\D=\D\cup\{i\}$\;     
   }
   }
   2) Take a decision over the resources in $\D$, accrue utility $(L-k)V_d(\bomega,\D)$ and remove $\D$ from the state ($\A_{k}-\D\rightarrow\A_{k+1}$).\\
  3) Test channel 
 \begin{equation}\label{eq:index_sorting}
 \phi_k=\argmax\limits_{i\in\A_{k+1}}\dfrac{\omega_i[k]r_i}{\mathds{E}[\tau_i{-k}|k]}
 \end{equation}
 and update $\bomega[k]$ from \eqref{eq:belief_update}.\\
  4) $k\rightarrow k+1$; 
 }
 \caption{Heuristic for the joint design of $\btau$ and $\bphi$}\label{alg:heuristic}
\end{algorithm}

\subsection{Asymptotic regret}\label{sec:asympt_regret}
Even if this work focuses on finite horizon, it is useful to briefly discuss  the asymptotic regret. 
Let us define the regret $\Psi(L)$ as the \emph{cumulative} loss over the entire horizon $L$
\begin{equation}\label{eq:regret}
\Psi(L)=L\sum\limits_{i\in\N}\omega_i[0]r_i-\mathds{E}\left[U(\bs,\N,L,c)\right]
\end{equation}
where $L\sum_{i\in\N}\omega_i[0]r_i$ represents the maximum achievable expected utility by a genie that knows exactly the state of each resource.
To find an asymptotically optimal strategy in term of regret per slot, i.e. a strategy that achieves 
$\lim_{L\rightarrow\infty}\frac{\Psi(L)}{L}=0$, 
one can simply consider a strategy that senses for a fixed amount of time growing with $\log L$, that would asymptotically achieve zero probability to declare the wrong status for each resource, and then directly derive the limit. 
This proposed static strategy would give a \emph{cumulative} regret with order $O\left({\log{L}}\right)$.
In our simulation, we will then consider, for the different proposed strategies, the growth rate of the regret with $\log L$, specifically 
$\lim_{L\rightarrow\infty}\frac{\Psi(L)}{\log{L}}$ to highlight the importance of the resource sorting in reducing this quantity.
The static approach, however, is clearly not feasible in the regime of small $L$ and this motivates our study of the optimum  policy structure and our heuristic for such scenarios.
\section{Applications}\label{sec:applications}
\subsection{Sequential Spectrum Sensing}\label{sec:sensing_model}
We now use our sequential decision model to find a single-band spectrum sensing technique for cognitive radio systems.  
The decision maker is a secondary transmitter and each resource $i$ is associated with a ``channel'' to communicate with a secondary receiver. 
The term channel is used in a broad sense and it could represent a frequency band with certain bandwidth, a set of tones in an OFDM system or a collection of spreading codes in a Code Division Multiple Access (CDMA) system.
%The scenario for channel $i$ is presented in Fig.\ref{fig:cognitive_scenario} where the primary transmitter $(PT_i)$ might or not be communicating with its primary receiver $(PR_i)$ and the secondary transmitter $(ST_i)$ is looking for an opportunity to communicate with its secondary receiver $(SR_i)$.
We denote with $\gamma_i^{P,S}$ the two instantaneous Signal-to-Noise Ratios at the receiver for the primary and secondary communication respectively. 
$\xi_i$ represents the instantaneous $SNR$ from the secondary transmitter that could interfere with the primary communication and $\zeta_i$ indicates the instantaneous $SNR$ at the secondary transmitter for the primary transmitter, which is the information the secondary transmitter uses to detect the primary communication.
We consider all the communication channels of interest as Rayleigh fading channels, therefore all the $SNR$'s are exponentially distributed, i.e.
\be\label{eq:pdf_choice}
p_{\eta}(x)=\dfrac{1}{\bar{\eta}}\exp\left(-\dfrac{x}{\bar{\eta}}\right)~~\mbox{for}~\eta\in\left\{\gamma_i^P,\gamma_i^S,\xi_i,\zeta_i\right\}
\ee 
where $\bar{\eta}$ indicates the average value of $\eta$. The phase of the received signal is uniform.
%\begin{figure}
%\begin{center}
%\includegraphics[scale=0.5]{Figures/cognitive_scenario.pdf}
%\caption{The secondary transmitter is seeking transmission opportunity to communicate with its secondary receiver}\label{fig:cognitive_scenario}
%\end{center}
%\end{figure}
If the secondary transmitter decides to transmit over the $i$-th channel ($\delta_i=0$) and the Primary Transmitter is not transmitting ($s_i=0$) then it accrues a utility per time instant which is a function of ${\gamma}_i^S$.
For our model, we decide to choose the outage rate ($R^S_{i,out}$) as the reward per time instant left ($r_i$), since it is reasonable to assume a slowly varying channel and that the instantaneous CSI (Channel Side Information) is not known at the receiver, therefore the secondary transmitter uses a constant data rate to transmit.
We refer to a design parameter $P^S_{i,out}$ that indicates the probability that the system can be in outage, i.e. the probability that the secondary transmitter cannot successfully decode the transmitted symbols.
Since we assume Rayleigh fading, we can express our reward per time instant left
\be\label{eq:r_i_def}
r_i=\left(1-P^S_{i,out}\right)W_0\log_2\left[1-\bar{\gamma}_i^S\ln\left(1-P^S_{i,out}\right)\right]
\ee
where the derivation can be found in \cite{outage_capacity_def}. If instead, the secondary transmitter decides to use the channel ($\delta_i=0$), interfering with the Primary Transmitter communication ($s_i=1$), then it receives a penalty equal to the loss in outage rate caused to the Primary User, considering its interference adds to the noise at the Primary Receiver.
Let us in fact assume that the Primary Transmitter also transmits at a certain outage rate, given by its value of $P^P_{i,out}$. 
In presence of interference from the secondary user its effective transmission rate becomes
\be\label{eq:R_i_xi_P}
R^P_{i,\xi_i}=(1-P^P_{i,\xi_i})W_0\log_2\left[1-\bar{\gamma}_i^P\ln\left(1-P^P_{i,out}\right)\right]
\ee 
where 
\be\label{eq:deriv_P_i_bar_xi_i}
P^P_{i,\bar{\xi}_i}=\dfrac{P_{i,out}^P-\bar{\xi}_i\ln\left(1-P_{i,out}^P\right)}{1-\bar{\xi}_i\ln\left(1-P_{i,out}^P\right)}
\ee
(for details see Appendix \ref{app:derivation_R_i_xi_P}).
The penalty $\rho_i$ is therefore defined as the loss in rate caused by the interference
\begin{align}
\rho_i&=R_{i,out}^P-R^P_{i,\xi_i}\nonumber\\&=\left(P^P_{i,\bar{\xi}_i}-P^P_{i,out}\right)W_0\log_2\left[1-\bar{\gamma}_i^P\ln\left(1-P^P_{i,out}\right)\right]\label{eq:rho_i_def}
\end{align}
In our model, we assume that cross-channel interference is negligible, i.e. the only possible interference for the primary communication over channel $i$ is given by a secondary transmission over the same channel.
For this work, we also assume that the decision maker has perfect knowledge of the averages $\bar{\gamma}_i^P,\bar{\gamma}_i^S,\bar{\xi}_i,\bar{\zeta}_i$ and the designed $P_{i,out}^P, P_{i,out}^S$ for all the $N$ resources.
%Furthermore, we also consider that the achievable rate for the secondary transmission in presence of a primary transmission is negligible with respect to the loss rate for the primary transmission, i.e. if we interfere we only consider the penalty for having caused interference and we do not accrue any utility.
%This corresponds to the following condition 
%
%\begin{align} 
%&\dfrac{1-P_{i,out}^S}{1-\bar{\zeta}_i\ln\left(1-P_{i,out}^S\right)}\log_2\left(1-\bar{\gamma}_i^S\ln\left(1-P_{i,out}^S\right)\right)\ll \nonumber\\&\dfrac{(1-P_{i,out}^P)\bar{\xi}_i\ln\left(1-P_{i,out}^P\right)}{\bar{\xi}_i\ln\left(1-P_{i,out}^P\right)-1}\log_2\left(1-\bar{\gamma}_i^P\ln\left(1-P_{i,out}^P\right)\right) 
%\end{align} 
%where we consider the average interference from the primary transmitter to the secondary receiver is approximately equal to the average signal strength that the secondary transmitter can sense (i.e. $\bar{\zeta}_i$). 
{ Our signal model considers standard AWGN, therefore when no primary transmission is present, the received baseband complex equivalent signal (only noise) samples are  independent identically distributed $\mathcal{CN}(0,N_0)$ (which stands for complex circularly symmetric zero mean normal Gaussian samples with variance $N_0$) whereas, when a primary transmission exists, since we have assumed Rayleigh fading, the received signal samples are $\mathcal{CN}(0,P_i+N_0)$, where $P_i$ already accounts for the path-loss which we consider deterministic (see Section \ref{sec:simulation_cognitive} for details on the PHY layer). It follows that a sufficient statistic for the detection problem is given by the absolute value squares of the samples of the received signal, which are $\chi^2_2$-distributed, i.e. exponential.
We then consider $f_0^i(o),f_1^i(o)$ exponential with parameter $\theta_0^i,\theta_1^i$ and the ratio $\frac{\theta_1^i}{\theta_0^i}=1+\bar{\zeta}_i$.}
%For the sensing of the resource by the secondary transmitter, it is natural to consider an exponential distribution for , i.e. 
%\be\label{eq:pdf_choice}
%f_j^i(o)=\frac{1}{\theta_j^i}\exp\left\{-\dfrac{o}{\theta_j^i}\right\}~~~\mbox{for}~~j=0, 1.
%\ee 
%where we recall in our application, the hypothesis $0(1)$ is the absence (presence) of activity by the Primary Transmitter, and therefore the ratio $$$

%For this particular sensing model, the Hillenger-divergence we used to approximate the decision threshold is given by:
%\be\label{eq:KL_div_sens_model}
%H^2(f_1^i||f_0^i)=1-\dfrac{2\sqrt{\bar{\zeta}_i+1}}{\bar{\zeta}_i+2}
%\ee
For this particular sensing model, the KL distances are known and the two terms $\hat{D}(f_0^i||f_1^i), \hat{D}(f_1^i||f_0^i)$ can be computed directly from their definition in \eqref{eq:def_hat_div_01}-\eqref{eq:def_hat_div_10}: 
\begin{align*}
\hat{D}(f_0^i||f_1^i)&=\dfrac{\log(1+\bar{\zeta}_i)}{1-\left(1+\bar{\zeta}\right)^{-\frac{\bar{\zeta}_i+1}{\bar{\zeta}_i}}}-\dfrac{\bar{\zeta}_i}{1+\bar{\zeta}_i},~~\hat{D}(f_1^i||f_0^i)=\bar{\zeta}_i.
\end{align*}
We can then directly evaluate the upper-bound in \eqref{eq:E_tau_i} for $\mathds{E}[\tau_i-k|k]$.
We can also give a closed form expression for the complementary $CDF$ $\bar{F}_{\omega_i[k+\ell]}^{\ell}(\omega|\varphi,s)$ in \eqref{eq:comple_CDF}, that will be used for different results that we will present in the next section. 
For our specific sensing model, by simple algebra and use the belief update equation in \eqref{eq:belief_update} we can prove that:
\begin{align}
\bar{F}_{\omega_i[k+\ell]}^{\ell}(\omega|\varphi,0)&\!=F_{\Gamma}^{\ell}\!\left(\!\frac{(1+\bar{\zeta}_i)\varsigma(\varphi,\omega)}{\bar{\zeta}_i}\!+\!\ln\left(1+\bar{\zeta}_i\right)^{\ell}\right)\label{eq:comple_CDF_0_deriv}\\
\bar{F}_{\omega_i[k+\ell]}^{\ell}(\omega|\varphi,1)&=F_{\Gamma}^{\ell}\!\left(\!\frac{\varsigma(\varphi,\omega)}{\bar{\zeta}_i}\!+\!\ln\left(1+\bar{\zeta}_i\right)^{\ell}\right)\label{eq:comple_CDF_1_deriv}
\end{align}
where $F_{\Gamma}^{\ell}(x)$ is the CDF of the sum of $\ell$ exponential random variables with unitary mean.
The key step to derive \eqref{eq:comple_CDF_0_deriv}-\eqref{eq:comple_CDF_1_deriv} is noting that, conditioned on the state $s$ of the resource, we have that $\frac{o[k]}{\theta_s^i}\overset{i.i.d.}{\sim}Exp(1)$.
Once we have a closed expression for $\bar{F}_{\omega_i[k+\ell]}^{\ell}(\omega|\varphi,s)$ for $s=0,1$ and expressions for $r_i,\rho_i$ in \eqref{eq:r_i_def}-\eqref{eq:rho_i_def} respectively, we can follow Algorithm \ref{alg:threshold_approximation} to find bounds on the thresholds and use Algorithm \ref{alg:heuristic}.
\subsection{Marketing Strategy}\label{sec:new_application}
Another potential application is in marketing. Consider a retail company trying to decide, from a set of $N$ potential products, which to buy from wholesales for a new season. The products are seasonal and lose their value at the end of the season. The length of the season is $L$ days. 
At the beginning of the season, the probability of a successful sale of product $i$ to a typical customer is $\omega_i[0]$. The company can influence the market by advertising product $i$ with a daily advertising cost of $c_i$. Assume that under daily advertising, the probability of successful sale evolves according to a Markov process with state space $[0,1]$. After $k$ ($k=1,2,\ldots$) days of advertising, assume that the probability of successful sale, $\omega_i[k]$, can be inferred, for example, from the number of clicks of the ads \cite{DBLP:journals/corr/KrishnamurthyAB16}. 
The decisions the company needs to make is when to stop advertising and if yes, whether to abandon product $i$ or buy from the whole sale and make it available for its customers (see Fig.\ref{fig:market_application_showcase}). 
\begin{figure}
\centering
\includegraphics[scale=0.25]{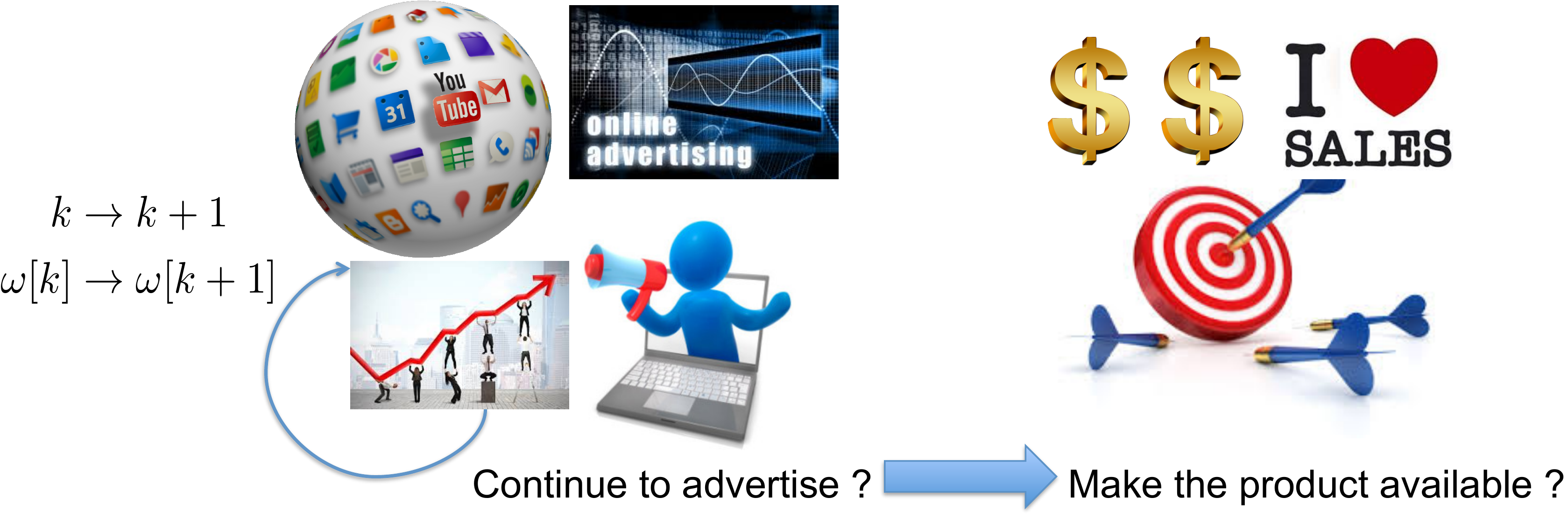}
\caption{Marketing Strategy Application}\label{fig:market_application_showcase}
\end{figure} 
Suppose that after $k$ days of advertising, the company chooses to make product $i$ available. With stochastically static customer arrivals, the expected profit from product $i$ per day for the remaining  $L-k$ days of the season is $\left[\omega_i[k] r_i -(1-\omega_i[k])\rho_i\right]$, where $r_i$ is the difference between the retail and the wholesale prices and $\rho_i$ is the daily holding cost of the unsold product. We assume that the company will stop advertising after deciding to buy in from wholesale and the probability of successful sale will stay at $\omega_i[k]$. 
Based on the well established equivalence of a POMDP with partially observable states (i.e., $s_i$ in the previous sections) to an MDP with belief (i.e., $\omega_i[k]$) as the fully observed states, we can see that the sequential decision problem studied in the previous sections apply to this marketing application assuming that the company advertises no more than one product each day and the Markovian transitions of $\omega_i[k]$ under daily advertising satisfies \eqref{eq:belief_update}.
In this case $o[k]$ is a latent random variable (not an observation) drawn from the mixture $\omega_i[k]f^i_0(o)+(1-\omega_i[k])f^i_1(o)$ (where $i$ is the advertised product at time $k$) such that 
\be\label{eq:belief_update_market}
\omega_i[k+1]=\frac{\omega[k]f^i_0(o[k])}{\omega[k]f^i_0(o[k])+(1-\omega_i[k])f^i_1(o[k])}.
\ee 
From \eqref{eq:belief_update_market} we have that the structure of the value function applies to this model as well, where for this case $s_i$ is not an underlying binary state of the product, but
is the outcome of the daily Bernoulli experiment in which the probability $\omega_i$ models the willingness of the customers to buy that product.
Under this assumption the two probability distribution functions $f^i_0(o),f^i_1(o)$ are interpreted as a parameter model fitting for the behavior of $\omega_i$ from previous market data series relative to similar products.
The motivation for assuming such model for the marketing application is that it is a relative simple model to fit but at the same time has a high flexibility given that $f_0$ and $f_1$ are general functions with no restriction other than being pdfs.
Furthermore the behavior of the function in \eqref{eq:belief_update_market} is such that $\omega$ is more sensitive to changes when it takes intermediate values (i.e. around $\frac{1}{2}$) and tends to saturate at the edges, i.e. when it approaches either $0$ or $1$.
Indeed, it is reasonable to assume the advertisement can be more impactful if the customer's uncertainty is higher, rather then when there is a clear indication of interest, in one way or the other, for a specific product.
Also, this model takes into account the avalanche effect that a popular product tends to become more and more popular: if $\omega_i[k]$ increases, it is more likely $o[k]$ will be drawn from $f^i_0(o)$ (higher weight in the mixture) and this will increase $\omega_i[k]$ and so on.
Therefore, since the action space and the expected utility correspond to the one described in our framework, our model applies. 
The bounds for the $V_t$ function need to be modified to account for the absence of a true underlying state. 
The upper bound becomes:
\begin{align}
&\overline{V_t^i}(\omega_i,i,k)=-c+\min\left\{\sum_{\ell=k}^{L-2}\prod\limits_{m=k+1}^{\ell}\hspace{-0.2cm}\overline{\mu_i}[m]\max\left\{-u[\ell-k]c\right.\right.\nonumber\\
&+(L-\ell-1)\hspace{-0.1cm}\left[\min\left\{\overline{\nu_0^i}[\ell] (r_i+\rho_i)-\rho_i \bar{F}^1_i\hspace{-0.1cm}\left(\overline{\nu_0^i}[\ell+1]|\overline{\nu_0^i}[\ell],0\right)\hspace{-0.1cm},\right.\right.\nonumber\\
&\left.\left.\left.\left. r_i\bar{F}^1_i\hspace{-0.1cm}\left(\underline{\nu_0^i}[\ell+1]|\overline{\nu_0^i}[\ell],0\right)\right\}\right],0\right\},(L-k-1)r_i\omega_i\right\}\label{eq:upper_bound_V_t_marketing}.
\end{align}
but a generalized lower bound is not tight and therefore, we have only used \eqref{eq:upper_bound_V_t_marketing} in our simulation.
The proof for obtaining \eqref{eq:upper_bound_V_t_marketing} is very similar to that given in Appendix D for the bounds in Lemma 5 and it is omitted for brevity. 
Notice that the bound for the expected sensing time in Lemma \ref{lemma_bound_expected_sensing_time} does not hold in this case, due to the different nature of the random variables $o[k]$'s.
Because of this, we cannot use the heuristic selection rule and, in the example shown in the simulation section, we have considered only the case in which the advertiser has a single product to decide on. 
The development of an analogous bound for the expected sensing time for this different model is left for future work.
%\begin{figure}[ht]
%\centering
%\includegraphics[width=.7\linewidth]{Figures/decision_diagram.pdf}
%\caption{Decision diagram for the single resource case}\label{fig:decision_diagram}
%\end{figure}
\section{Simulation Results}\label{sec:simulation}
We now present our simulation results for the two applications presented: cognitive radio and marketing.
\vspace{-.3cm}
\subsection{Cognitive Radio}\label{sec:simulation_cognitive}
We highlight that, for the case of a single resource, we can compute the optimal decision thresholds $(\nu_1^i[k],\nu_0^i[k])$ obtained by MonteCarlo evaluation of the integral in \eqref{eq:V_t_def}.
In our simulations we will show the improvement in the utility we can get by using the decision threshold bounds obtained with our approach in Algorithm 1 instead of using the easy bounds in \eqref{eq:easy_lb_nu_1}-\eqref{eq:easy_ub_nu_0}, that are obtained by convexity of the function $V_t$.
For the cognitive radio case we have the following parameters: $\omega_i[0]=0.5, r_i=2,\rho_i=2,\bar{\zeta}_i=3$. 
\begin{figure}[ht]
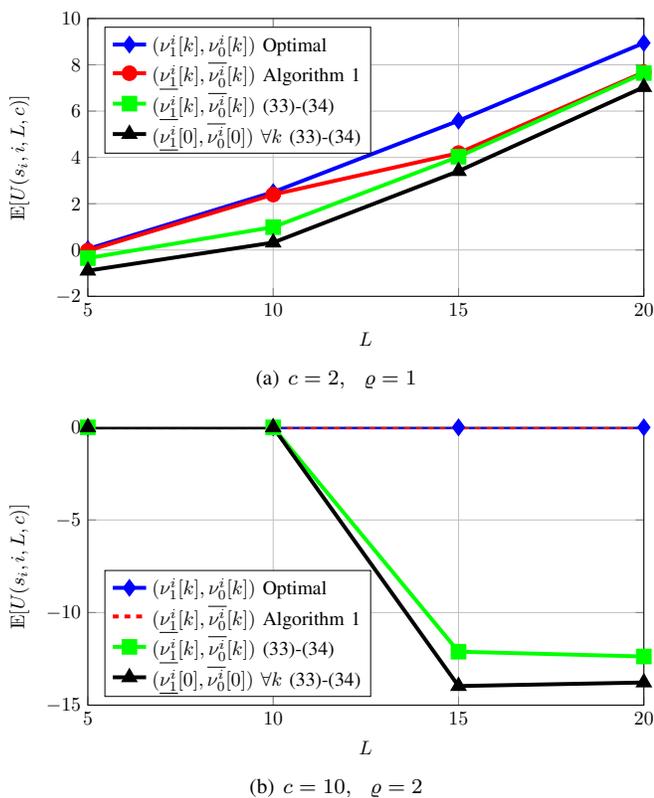

    \centering
\subfigure[$c=2,~~\varrho=1$]{
\includegraphics[width=\linewidth]{Figures/single_2_2.tikz}\label{fig:single_2_2}
}
\subfigure[$c=10,~~\varrho=2$]{
\includegraphics[width=\linewidth]{Figures/single_10_4.tikz}\label{fig:single_10_4}
}
\caption{Utility for the single-resource case with different decision thresholds in the regime of small $L$}\label{fig:utility_single_resource}
\end{figure}
In Fig. \ref{fig:single_2_2} we can see how for $L\leq 10$ our thresholds approximation in Algorithm \ref{alg:threshold_approximation} can reach the same utility of the optimal decision thresholds, while for higher values of $L$ it acquires the same utility of the easy bounds obtainable from \eqref{eq:easy_lb_nu_1}-\eqref{eq:easy_ub_nu_0}.
The reason for this is that, when $L$ increases, our upper-bound for $V_t$ in \eqref{eq:upper_bound_V_t}, which is essentially a sum of upper-bounds, becomes looser.
We also looked at the performance obtainable by constant decision thresholds (further referred to as CT strategy), i.e. 
\begin{align}
\underline{\nu_1^i}[k]&=\underline{\nu_1^i}[0]=\min\left\{\dfrac{c}{(L-1)r_i},\dfrac{\rho_i}{\rho_i+r_i}\right\}\label{eq:const_nu1}\\
\overline{\nu_0^i}[k]&=\overline{\nu_0^i}[0]=\max\left\{\dfrac{L\rho_i-c}{L\rho_i+r_i},\dfrac{\rho_i}{\rho_i+r_i}\right\}\label{eq:const_nu0}
\end{align}
$\forall k=0,1,\dots,L-1$
which always achieve worse utility than the others, being the looser approximation of the optimal thresholds.
Notice that for higher value of $c$ and $\varrho$, in Fig.\ref{fig:single_10_4}, the optimal strategy is to not start sensing and simply accepting utility $0$.
Our thresholds approximation (Algorithm \ref{alg:threshold_approximation}) is able to capture this and follows the optimal strategy, while using the bounds in \eqref{eq:easy_lb_nu_1}-\eqref{eq:easy_ub_nu_0} or constant thresholds gives a negative utility.
We can notice how the difference in utility between the optimal strategy and our approximation is higher for the marketing strategy application and developing new bounds for this case will { be the object} of future research.
For the second experiment we analyze the case with multiple channels to be sensed for the cognitive radio application. 
We consider a heterogeneous network with $4$ Primary Transmitters located at the corners of a square with side $500 m$.
Our entire bandwidth goes from $800$ to $900 MHz$ (see Fig. \ref{fig:topology_simulation}).   
A secondary transmitter wants to find opportunities to communicate with $N=20$ secondary receivers randomly spread around. 
The $20$ channels have a bandwidth of $5 MHz$ and we assume they are equally divided among the $4$ Primary transmitter: { observations are then collected at the Nyquist rate for the single channel, i.e. $2\mu s$}.
We assumed that the primary transmitter power is $10 dBm$ and the height of the transmitter is $10m$, while for the secondary we chose power equal to $5 dBm$ and a transmitter height of $3m$. 
The height of all the receivers is equal to $1m$.
\begin{figure}
\begin{center}
\includegraphics[scale=0.25]{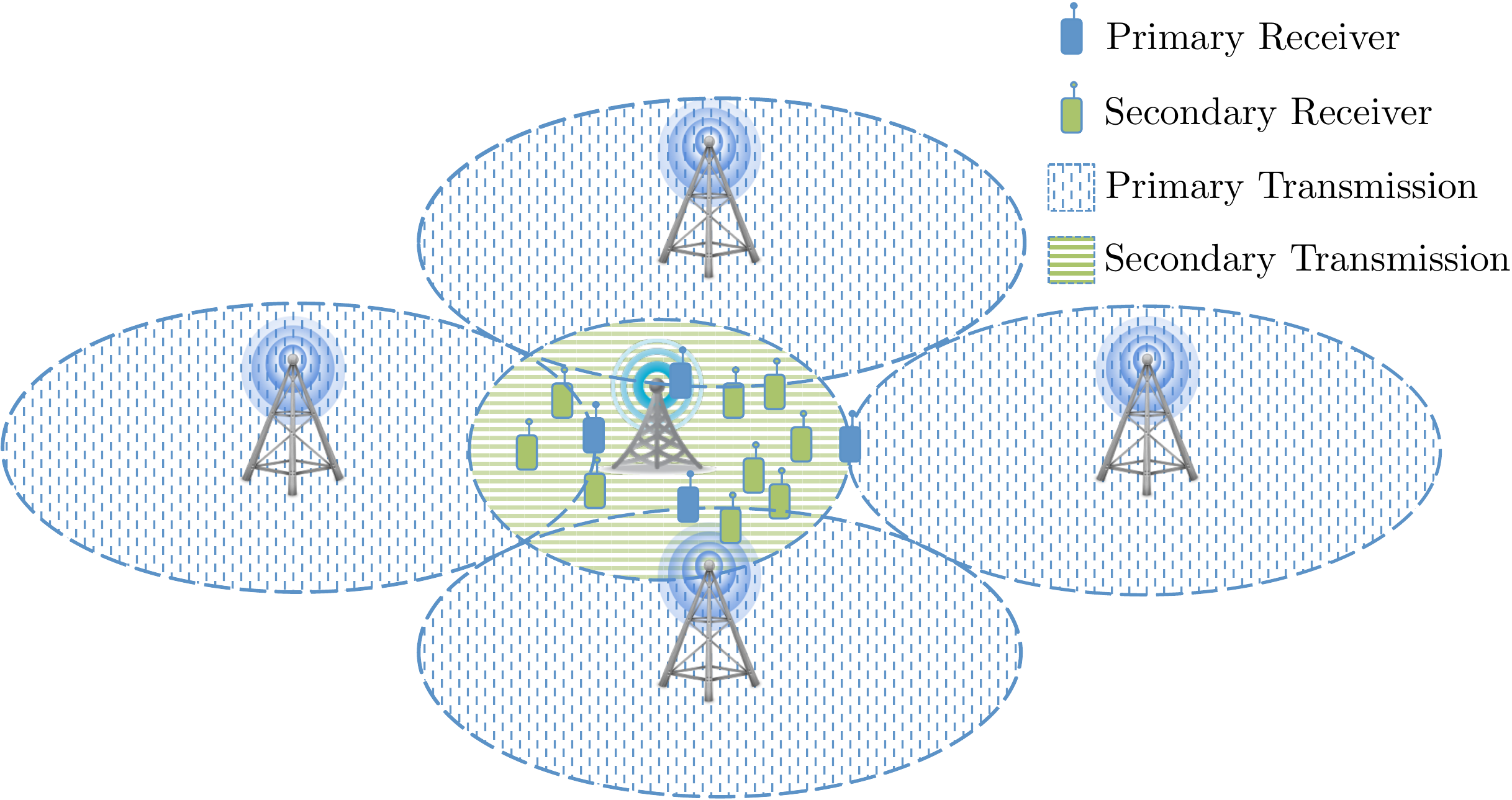}
\caption{Network Topology of our Cognitive Radio Scenario}\label{fig:topology_simulation}
\end{center}
\vspace{-0.2cm}
\end{figure}
We assume the secondary transmitter has an estimate for $\bar{\gamma}_i^S$ (the average $SNR$ at the secondary receiver from the secondary communication) and  $\bar{\zeta}_i$ (the average $SNR$ at the secondary transmitter from the primary communication), considering a deterministic path loss propagation model.
We used a deterministic two-ray model to predict the average $SNR$ received \cite{book_rappaport_wireless}, i.e. the $P_i$ value introduced in Section \ref{sec:sensing_model} $\left(\text{i.e.}~P_i=P_tG\frac{h_t^2h_r^2}{d_i^4}\right)$. The value of $G$ has been set to $2\cdot 10^{-4}$ to have average received $SNR$ ($\zeta_i$) at the secondary (in case of primary transmission) in the range of $5-10 dB$, which is a reasonable range if we assume to only have thermal noise and no additional interference over the channel.
The secondary transmitter does not know the position of the potential primary receiver but he estimates $\bar{\gamma}_i^P$ (the average $SNR$ at the primary receiver from the primary communication, without interference) and $\overline{\xi}_i$ (the average $SNR$ from the secondary transmitter to the primary receiver, that could interfere with an existing primary communication), by considering the closest point the primary receiver could be in the coverage area, i.e. the highest interference he could create (which does not necessarily correspond to the highest $\rho_i$).
As explained in Section \ref{subsec:statement}, the value of $c$ is limited by the actual cost of testing.
We want to study how the performances in terms of utility for different strategies (stopping rule and selection rule) change for different values of $\varrho$ and $c$. 
We compared the performances of Algorithm \ref{alg:heuristic} with three possible alternatives:
\begin{enumerate}
\item A selection rule that follows an initial arbitrary order (indicated in our plot with $NS$= ``No Sorting''), i.e. $\phi_k=\argmax_{i\in\A_{k+1}}i$.
\item Constant decision thresholds (indicated in our plot with $CT$= ``Constant Thresholds''), as in \eqref{eq:const_nu1}-\eqref{eq:const_nu0} $\forall i\in\N$
\item both 1) and 2) (indicated with $CTNS$)
\end{enumerate} 
Notice that the NS procedure corresponds to conducting a sequence of concatenated truncated SPRT, which, as highlighted in Section \ref{sec:formulation}, represents a suboptimal strategy for our problem (even with optimal thresholds).
We will study the performances in three different regimes of $L$.
For low values of $L$ (generally $L\leq N$) the coupling of the problem becomes more relevant since there is no time for the decision maker to sense all the resources and the single-resource decision thresholds are a loose approximation of the actual optimal thresholds which are much tighter.
In light of this, we add the following step in our Algorithm \ref{alg:heuristic}.
After computing the quantities $\overline{\mathds{E}}\left[\tau_i-k|k\right]$ for $i\in\A_k$ from \eqref{eq:E_tau_i} and sort the resources according to our index in \eqref{eq:index_sorting}, we keep in $\A_{k+1}$ all the resources with higher index as long as the following condition is satisfied:
\begin{equation}\label{eq:additional_removal}
\sum_{i\in\A_{k+1}}\overline{\mathds{E}}[\tau_i-k|k]<(1+\epsilon)(L-k-1)
\end{equation}
and add to $\D$ the remaining ones.
The motivation for this additional step is that, since we expect to not have time to sense all of them, we might start acquiring positive utility in expectation from some of them. 
In Fig.\ref{fig:small_L_AR} we indicate this additional removal with ``AR''.
\begin{figure}[ht]
\includegraphics[width=\linewidth]{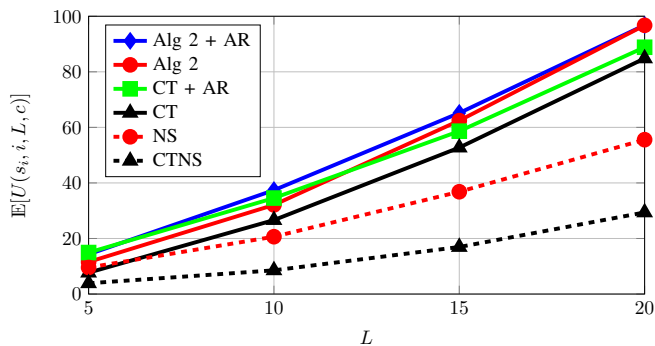}
\caption{Utility of our heuristic in the regime of small $L$ ($N=20$, $c=1,\varrho=1$ and $\epsilon=0.5$ in \eqref{eq:additional_removal}).}\label{fig:small_L_AR}
\end{figure}
We can see that for $L\leq 10$ the CT modification of Algorithm 2 can achieve higher utility, by including the additional removal of resources, than the original Algorithm \ref{alg:heuristic}. 
For $L=20$ the importance of the additional removal becomes almost negligible for Algorithm \ref{alg:heuristic}, while it continues to have an impact on the CT modification.
In this regime of L, we can see the importance of the selection rule by looking at the two strategies that do not sort the resources (dashed lines) and are clearly outperformed by the other strategies.
In Fig. \ref{fig:mod_L} we show the utility achieved by the $4$ strategies previously introduced: Alg \ref{alg:heuristic}, CT, NS, CTNS for moderate values of $L$ (i.e. $70\leq L\leq 100$).
We can see how Algorithm \ref{alg:heuristic} outperforms the other strategies and also that when we increase the cost $c$, the performances of CT, initially close to our complete heuristic, get worse and considering time-varying threshold becomes more important than sorting, i.e. the NS utility is higher than CT. 
The CTNS approach is always the worse.
The same trend was observed for different values of $\varrho$.
\begin{figure}[ht]
    \centering
\subfigure[$c=1,~~\varrho=2$]{
\includegraphics[width=\linewidth]{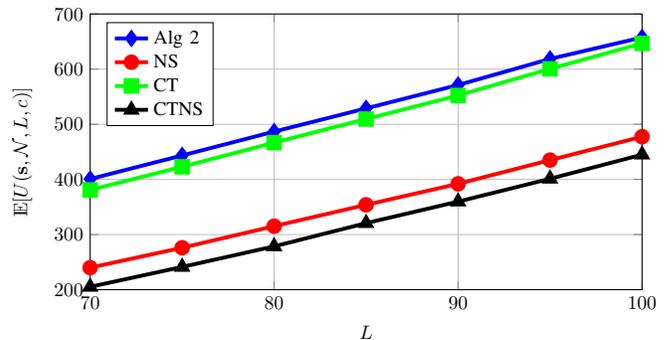}\label{fig:mod_L_1_2}
}
\subfigure[$c=5,~~\varrho=2$]{
\includegraphics[width=\linewidth]{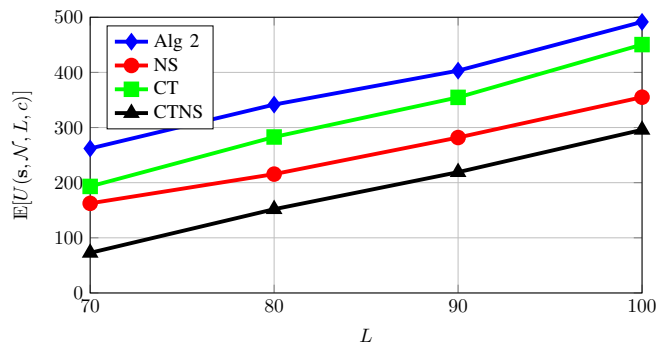}\label{fig:mod_L_5_2}
}
\subfigure[$c=10,~~\varrho=2$]{
\includegraphics[width=\linewidth]{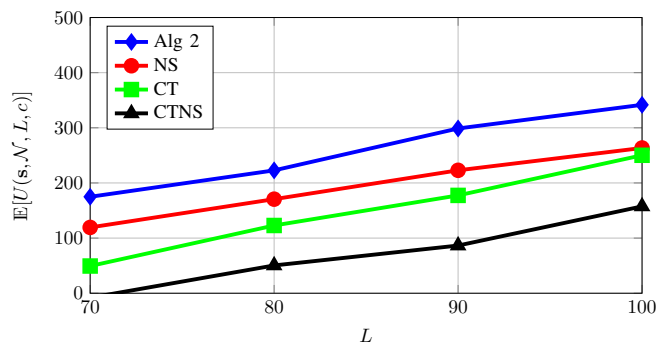}\label{fig:mod_L_10_2}
}
\caption{Utility of our heuristic in the regime of moderate $L$}\label{fig:mod_L}
\end{figure}
Finally, we look at the regime for high $L$.
Following the discussion in Section \ref{sec:asympt_regret}, in Fig.\ref{fig:Lar_L} we plot the growth rate of the regret $\frac{\Psi(L)}{\log{L}}$ for the $4$ different strategies. 
We can see this quantity is approximately constant, and the key factor to reduce this constant relies in the selection rule, other than the time-varying behavior of the decision thresholds.
Similar trends were observed for different values of $c$ and $\varrho$ where both higher $c$ and $\varrho$ increase the regret $\Psi(L)$.
\begin{figure}[ht]
\includegraphics[width=\linewidth]{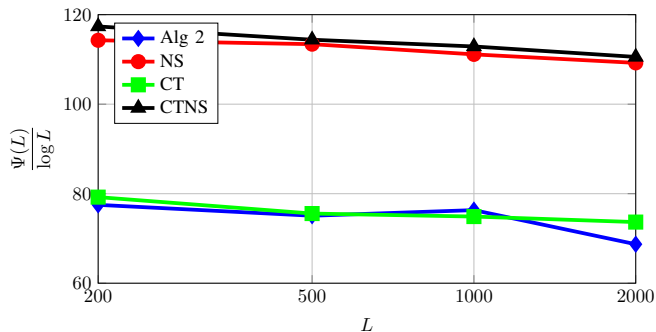}
\caption{Asymptotic growth rate of the regret $\Psi(L)$ with $\log{L}$ for our heuristic ($c=2,\varrho=2$)}\label{fig:Lar_L}
\end{figure}
\vspace{-.3cm}
\subsection{Marketing}\label{sec:simulation_market}
For the second application of online marketing strategy, as motivated in Section \ref{sec:new_application}, we can only show the case for a single product. 
We assumed $f_0(o)\sim\mathcal{N}(0;1), f_1(o)\sim\mathcal{N}(0.75;1)$ and the following parameters: $\omega_i[0]=5,r_i=1,\rho_i=1$ (where $r_i,\rho_i,c$ are normalized over their unit measure, i.e. thousands of \textdollar). 
In Fig.\ref{fig:market} we have plotted, similarly to what has been shown in Fig. \ref{fig:utility_single_resource}, the improvement in achievable utility obtained by our approximation for the bounds of $V_t$ compared to the bounds in \eqref{eq:easy_lb_nu_1}-\eqref{eq:easy_ub_nu_0}.
\begin{figure}[ht]
\includegraphics[width=\linewidth]{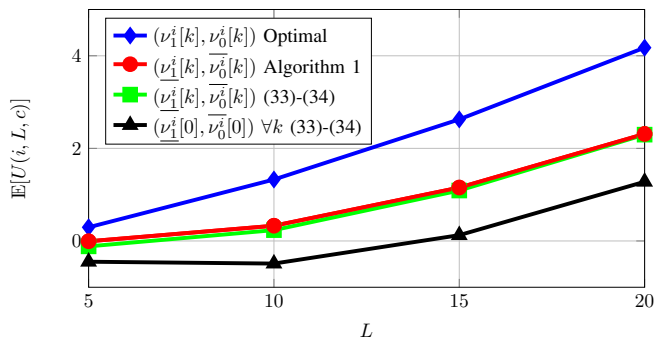}
\caption{Utility for the single-product case with different decision thresholds in the regime of small $L$ for $c=0.3$ for the marketing strategy application}\label{fig:market}
\end{figure}
We highlight that for $L=5$ only the optimal decision thresholds can guarantee a positive utility, however our thresholds approximation algorithm achieves a loss which is only $6\%$ of the loss that one would obtain using the bounds in \eqref{eq:easy_lb_nu_1}-\eqref{eq:easy_ub_nu_0} and for $L=10$ our approximation algorithm utility is $40\%$ higher.
\section{Conclusions}\label{sec:conclusions}
We proposed a new framework to design a sequence of tests over a finite time horizon and accrue a utility which is function of an unknown state vector and depends on the actions the decision maker takes at each time instant.
We derived the structure of the optimal strategy and we proposed a heuristic to approximate the optimal decisions.
Simulation results were reported to support our findings and have a physical understanding of the role played by the parameters in the utility accrued by the decision maker for two different applications: cognitive radio and marketing strategy.
\vspace{-.3cm}
\begin{appendix}
\subsection{Proof of Lemma \ref{lemma_convexity_V_i}}\label{app:proof_convexity_V_i}
We first notice that $\forall\bomega,\A$, $V(\bomega,\A,L)=0$ implies $V^j_t(\bomega,\A,L-1)=-c,~\forall~j\in\A$  and therefore 
\begin{align*}
V(\bomega,\A,L-1)&=\max_{\D\subseteq\A}\{V_d(\bomega,\D)+\max_{j\in\A\setminus\D}V^j_t(\bomega,\A,L-1)\}\\&=V_d(\bomega,\A)=\sum_{j\in\A}V^j_d(\omega_j) 
\end{align*}
which is convex in any $\omega_j, j\in\A$, since it is a positive sum of piece-wise linear functions (see \eqref{eq:V_d_def_no_delta}).
We then prove the lemma by induction.
We assume $V(\bomega,\A,k+1)$ is convex in $\omega_j$ and we show $V(\bomega,\A,k)$ is convex by showing all the possible functions $V^i_t(\bomega,\A,k)$ are convex in $\omega_j$.
We start by the case $j=i$.
Without loss of generality, let us consider two $\bomega$ say $(\bomega^1,\bomega^2)$ which differ only in the $j$-th entry (i.e $\omega_{\ell}^1=\omega_{\ell}^2, \forall \ell\neq i$). 
We want to prove that for $0\leq\lambda\leq1$ we have $\lambda V^i_t(\bomega^1,\A,k)+(1-\lambda)V^i_t(\bomega^2,\A,k)\geq V^i_t(\lambda\bomega^1+(1-\lambda)\bomega^2,\A,k)$.
From \eqref{eq:V_t_def} we can write (we will use the short notation $\bomega^{\ell,o}=\Pi(\bomega^{\ell},o,i)$ for $\ell=1,2$)
\begin{align}
&\lambda V^i_t(\bomega^1,\A,k)+(1-\lambda)V_t^i(\bomega^2,\A,k)\nonumber\\
&=-c+\int\lambda V(\bomega^{1,o},\A,k+1)f^i_{1-\omega_i^1}(o)\nonumber\\
&~~~~~~~~~~~~~+(1-\lambda)V(\bomega^{2,o},\A,k+1)f^i_{1-\omega_i^2}(o)do\nonumber\\
&=-c+\int\left[\mu V(\bomega^{1,o}_i,\A,k+1)+(1-\mu)V(\bomega^{2,o},\A,k+1)\right]\nonumber\\
&~~~~~~~~~~~~~\left[\lambda f^i_{1-\omega_i^1}(o)+(1-\lambda)f^i_{1-\omega_i^2}(o)\right]do\nonumber\\
&\overset{(a)}{\geq}-c+\int \left[V(\mu\bomega^{1,o}+(1-\mu)\bomega^{2,o},\A,k+1)\right]\nonumber\\
&~~~~~~~~~~~~~\left[\lambda f^i_{1-\omega_i^1}(o)+(1-\lambda)f^i_{1-\omega_i^2}(o)\right]do \label{eq:convexity_proof_intermediate}
\end{align}
where $(a)$ follows from the assumption that $V(\bomega,\A,k+1)$ is convex in $\omega_i$ and
$
\mu=\frac{\lambda f^i_{1-\omega_i^1}(o)}{\lambda f^i_{1-\omega_i^1}(o)+(1-\lambda)f^i_{1-\omega_i^2}(o)}.
$
Now, if we define $\bomega^3=\lambda\bomega^1+(1-\lambda)\bomega^2$ we have 
\begin{equation}\label{eq:O3_derivation}
f^i_{1-\omega_i^3}(o)=\lambda f^i_{1-\omega_i^1}(o)+(1-\lambda) f^i_{1-\omega_i^2}(o)
\end{equation}
since $f^i_{1-\omega_i}(o)$ is a linear affine function of $\omega_i$ and also
\begin{align*}
&\Pi_i(\omega_i^3,o,i)=\dfrac{\omega_i^3f_0^i(o)}{f_{1-\omega_i^3}^i(o)}=\dfrac{[\lambda\omega_i^1+(1-\lambda)\omega_i^2]f_0^{(i)}(o)}{\lambda f^i_{1-\omega_i^1}(o)+(1-\lambda) f^i_{1-\omega_i^2}(o)}\nonumber\\
&=\dfrac{\lambda f^i_{1-\omega_i^1}(o)\dfrac{\omega_i^1f_0^{(i)}(o)}{f^i_{1-\omega_i^1}(o)}+(1-\lambda)f^i_{1-\omega_i^2}(o)\dfrac{\omega_i^2f_0^{(i)}(o)}{f^i_{1-\omega_i^2}(o)}}{\lambda f^i_{1-\omega_i^1}(o)+(1-\lambda) f^i_{1-\omega_i^2}(o)}\nonumber\\
&=\mu\Pi_i(\omega_i^1,o,i)+(1-\mu)\Pi_i(\omega_i^2,o,i)
\end{align*}
which implies 
\be\label{eq:mu3_derivation} 
\bPi(\bomega^3,o,i)=\mu\bPi(\bomega^1,o,i)+(1-\mu)\bPi(\bomega^2,o,i).
\ee 
Therefore, by replacing \eqref{eq:O3_derivation},\eqref{eq:mu3_derivation} in \eqref{eq:convexity_proof_intermediate} we have 
\begin{align}
&\lambda V^i_t(\bomega^1,\A,k)+(1-\lambda)V^i_t(\bomega^2,\A,k)\nonumber\\
&\geq-c+\int V(\Pi(\bomega^3,o,i),\A,k+1)f^i_{1-\omega_i^3}(o)\nonumber\\
&=V^i_t(\bomega^3,\A,k)=V^i_t(\lambda\bomega^1+(1-\lambda)\bomega^2,\A,k) 
\end{align}
and this proves the convexity of $V^i_t(\bomega,\A,k)$ in $\omega_i$.
The convexity of $V^i_t(\bomega,\A,k)$ in $\omega_j$, $j\neq i$ can be proved by considering two points $(\bomega^1,\bomega^2)$ with the same $i$-th coordinate (i.e $\omega_{i}^1=\omega_{i}^2=\omega_i$) and following similar steps as before, where this time we have 
\be\label{eq:mu3_derivation_2} 
\bPi(\bomega^3,o,i)=\lambda\bPi(\bomega^1,o,i)+(1-\lambda)\bPi(\bomega^2,o,i).
\ee 
and we do not need to introduce $\mu$ to conclude our proof.
To show the function $V(\bomega,\A_k,k)$ is convex in $\omega_j$ we rewrite the maximization over $\D$ in \eqref{eq:value_function_exp} as follows:
\be\label{eq:rewrite_value_function_2}
\!V(\bomega,\A_k,k)\!=\!\max\left\{\max\limits_{\{j\}\subseteq\D\subseteq\A_k}\!\!J(\D),\max_{\D\subseteq\A_k\setminus\{j\}}\!J(\D)\right\}
\ee
where the function $J(\D)$ is defined in \eqref{eq:set_function}.
Let us then call $f_1$ and $f_2$ the two maximizations inside \eqref{eq:rewrite_value_function_2} in the order they appear and let us omit the arguments for brevity.
Now if we see $f_1$ and $f_2$ as functions of $\omega_j$ we have that both $f_1$ and $f_2$ are convex functions of $\omega_j$.
In fact $\omega_j$ in $f_1$ appears only as argument of $V_d^j(\omega_j)$ which is a piece-wise linear function of $\omega_j$ and therefore convex, whereas in $f_2$, $\omega_j$ is an argument of the second term of $J(\D)$, which is the maximization over the index $i$ of the functions $V_t^i(\bomega,\A_{k+1},k)$ ($i,j\in\A_{k+1}$) that are all convex in $\omega_j$. Therefore the maximum is convex and we can conclude $f_2$ is convex. 
The convexity of $V(\bomega,\A_k,k)$ follows from the fact it is the maximum of two convex functions.
\vspace{-.3cm}
\subsection{Proof of Lemma \ref{lemma_time_variant_threshold}}\label{app:time_variant_threshold}
Let us first introduce the following lemma 
\begin{lemma}\label{lemma_V_k_V_km1}
$\forall i\in\N, \omega_i\in [0,1], k=0,\dots,L-2$:
\be\label{eq:V_k_V_km1}
V_t^i(\omega_i,i,k)\geq V_t^i(\omega_i,i,k+1)+V_d^i(\omega_i) 
\ee
\end{lemma}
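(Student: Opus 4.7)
\textbf{Proof plan for Lemma \ref{lemma_V_k_V_km1}.} I would prove the inequality by backward induction on $k$, using two ingredients: the martingale property of Bayesian belief updates ($\mathbb{E}[\omega_i[k+1]\mid \omega_i[k]]=\omega_i[k]$ with respect to the predictive density $f^i_{1-\omega_i}$) combined with the convexity of $V_d^i$, and a case analysis on which of the two branches (immediate decision vs.\ continued testing) achieves the maximum defining $V(\omega_i,\{i\},\cdot)$.

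\emph{Base case} $k=L-2$. From $V(\omega_i,\{i\},L)=0$ one gets $V_t^i(\omega_i,i,L-1)=-c$, and since $V_d^i(\omega_i)\ge 0>-c$, we have $V(\omega_i,\{i\},L-1)=V_d^i(\omega_i)$. Then
\begin{equation*}
V_t^i(\omega_i,i,L-2)=-c+\int V_d^i\!\bigl(\Pi_i(\omega_i,o,i)\bigr)\,f^i_{1-\omega_i}(o)\,do.
\end{equation*}
Because $V_d^i(\cdot)=\max\{(r_i+\rho_i)\omega-\rho_i,0\}$ is convex, Jensen's inequality together with the martingale property of the belief yields $\int V_d^i(\Pi_i(\omega_i,o,i))\,f^i_{1-\omega_i}(o)\,do \ge V_d^i(\omega_i)$. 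Rearranging gives $V_t^i(\omega_i,i,L-2)\ge -c+V_d^i(\omega_i)=V_t^i(\omega_i,i,L-1)+V_d^i(\omega_i)$, as desired.

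\emph{Inductive step.} Assume the statement at $k+1$. The key intermediate claim is
\begin{equation*}
V(\omega_i',\{i\},k+1)\ \ge\ V(\omega_i',\{i\},k+2)+V_d^i(\omega_i'),\qquad \forall\,\omega_i'\in[0,1],
\end{equation*}
which I establish by cases on which branch attains the max in $V(\omega_i',\{i\},k+2)=\max\{(L-k-2)V_d^i(\omega_i'),V_t^i(\omega_i',i,k+2)\}$. If the immediate-decision branch wins, then the immediate-decision branch at time $k+1$ already yields $(L-k-1)V_d^i(\omega_i')=(L-k-2)V_d^i(\omega_i')+V_d^i(\omega_i')$. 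If the testing branch wins, the inductive hypothesis gives $V_t^i(\omega_i',i,k+1)\ge V_t^i(\omega_i',i,k+2)+V_d^i(\omega_i')$, and the inequality follows since $V(\omega_i',\{i\},k+1)\ge V_t^i(\omega_i',i,k+1)$.

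With this pointwise inequality in hand, I subtract the integral representations of $V_t^i(\omega_i,i,k)$ and $V_t^i(\omega_i,i,k+1)$:
\begin{equation*}
V_t^i(\omega_i,i,k)-V_t^i(\omega_i,i,k+1)=\int\!\bigl[V(\Pi_i(\omega_i,o,i),\{i\},k+1)-V(\Pi_i(\omega_i,o,i),\{i\},k+2)\bigr]\,f^i_{1-\omega_i}(o)\,do,
\end{equation*}
apply the pointwise bound under the integral to lower it by $\int V_d^i(\Pi_i(\omega_i,o,i))\,f^i_{1-\omega_i}(o)\,do$, and invoke Jensen's inequality with the belief martingale once more to conclude this quantity is at least $V_d^i(\omega_i)$. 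The main subtlety is not technical but organizational: I must be careful that the case analysis on $V(\cdot,\{i\},k+2)$ is done \emph{inside} the integral (for each $o$ separately), so that the choice of branch can depend on the posterior belief $\Pi_i(\omega_i,o,i)$; the Jensen step at the end is valid because the lower bound $V_d^i(\Pi_i(\omega_i,o,i))$ is itself convex in the argument and independent of the branch selected.
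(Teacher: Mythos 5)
Your proposal is correct and follows essentially the same route as the paper's proof: backward induction on $k$, with the base case and the final step both resting on Jensen's inequality for the convex $V_d^i$ together with the martingale property of the belief update. Your explicit pointwise claim $V(\cdot,\{i\},k+1)\ge V(\cdot,\{i\},k+2)+V_d^i(\cdot)$ is just the paper's in-integral manipulation $\max\{(L-k)V_d^i,V_t^i(\cdot,k+1)+V_d^i\}=\max\{(L-k-1)V_d^i,V_t^i(\cdot,k+1)\}+V_d^i$ stated separately, so the two arguments coincide.
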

\begin{proof}
We will prove once again by induction. First we show that \eqref{eq:V_k_V_km1} is true for $k=L-2$ (we use the short notation $\omega_i^o=\Pi_i(\omega_i,o,i)$ and $dF_i^o=f^i_{1-\omega_i}(o)do$).
\begin{align*}
&V_t^i(\omega_i,i,L-2)+c=\hspace{-0.1cm}\int\limits_{\mathbf{O}}\hspace{-0.1cm}V\left(\omega_i^o,i,L-1\right)dF_i^o=\hspace{-0.1cm}\int\limits_{\mathbf{O}}V^i_d\left(\omega_i^o\right)dF_i^o\\
&\overset{(a)}{\geq}\!V_d^i\!\left(\int\limits_{\mathbf{O}}\omega_i^odF_i^o\right)\!\!\overset{(b)}{=}\!V_d^i(\omega_i)\!=\!V_t^i(\omega_i,i,L-1)+V_d^i(\omega_i)+c
\end{align*}
where $(a)$ holds for the convexity of the function $V_d^i$ and $(b)$ holds for the martingale property of the prior belief update
\begin{align*}
\int_{\mathbf{O}}\Pi_i(\omega_i,o,i)f_{1-\omega_i}^i(o)do=\int_{\mathbf{O}}\dfrac{\omega_if_0^i(o)}{f_{1-\omega_i}^i(o)}f_{1-\omega_i}^i(o)do=\omega_i.
\end{align*}
Then we show that if \eqref{eq:V_k_V_km1} holds for $k$, then it holds for $k-1$.
\begin{align*}
&V_t^i(\omega_i,i,k-1)+c=\int\limits_{\mathbf{O}}V\left(\omega_i^o,k\right)dF_i^o\\
&=\int\limits_{\mathbf{O}}\max\left\{(L-k)V_d^i(\omega_i^o),V_t^i(\omega_i^o,i,k)\right\}dF_i^o\\
&\overset{(a)}{\geq}\int\limits_{\mathbf{O}}\max\left\{(L-k)V_d^i(\omega_i^o),V_t^i(\omega_i^o,i,k+1)+V_d^i(\omega_i^o)\right\}dF_i^o\\
&=\hspace{-0.1cm}\int\limits_{\mathbf{O}}\hspace{-0.1cm}\left[\max\left\{(L-k-1)V_d^i(\omega_i^o),V_t^i(\omega_i^o,i,k+1)\right\}+V_d^i(\omega_i^o)\right]dF_i^o\\
&=V_t^i(\omega_i,i,k)+c+\int\limits_{\mathbf{O}}V_d^i(\omega_i^o)dF_i^o\\
&\overset{(b)}{\geq}V_t^i(\omega_i,i,k)+c+V_d^i\left(\int_{\mathbf{O}}\omega_i^odF_i^o\right)\overset{(c)}{=}V_t^i(\omega_i,i,k)+c+V_d^i(\omega_i)
\end{align*}
where $(a)$ holds for the induction hypotheses, $(b)$ for the convexity of $V_d^i$ and $(c)$ for the martingale property of the belief update.
\end{proof}
Now, to show the two thresholds respect \eqref{eq:nu_1_k}-\eqref{eq:nu_0_k} it is equivalent to prove the following statement 
\begin{align*}
\forall \omega_i\in[0,1],~&(L-k)V_d^i(\omega_i)\geq V_t^i(\omega_i,i,k)\\&\Rightarrow (L-k-1)V_d^i(\omega_i)\geq V_t^i(\omega_i,i,k+1)
\end{align*}
and this can be proved since 
\begin{align*}
&(L-k-1)V_d^i(\omega_i)=(L-k)V_d^i(\omega_i)-V_d^i(\omega_i)\\
&\overset{(a)}{\geq} V_t^i(\omega_i,i,k)-V_d^i(\omega_i)\overset{(b)}{\geq}V_t^i(\omega_i,i,k+1)
\end{align*}
where $(a)$ holds by hypothesis and $(b)$ follows from Lemma \ref{lemma_V_k_V_km1}, and this completes the proof.
\subsection{Proof of Lemma \ref{lemma_threshold_inequalities}}\label{app:threshold_inequalities}
We start our proof by highlighting a property of the value function defined in \eqref{eq:value_function_exp} with the following lemma
\begin{lemma}\label{lemma_submodular_value_function}
$\forall i\in\N,\A'\in 2^{\N-i},\omega_i\!\in\![0,1],k=0,\dots,L-1$
\be\label{eq:submodular_value_function}
V(\bomega,\A',k)+V(\omega_i,i,k)\geq V(\bomega,\A'+i,k) 
\ee
\end{lemma}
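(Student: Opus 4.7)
My plan is to proceed by backward induction on $k$. The base case $k=L$ is trivial since $V(\bomega,\A,L)=0$ for every $\bomega,\A$. For the inductive step, I would let $(\D^*,j^*)$ denote the optimizers of the outer maximization in \eqref{eq:set_function_value_function} that defines $V(\bomega,\A'+i,k)$, and split into three cases: (i) $i\in\D^*$; (ii) $i\notin\D^*$ and $j^*\neq i$; (iii) $i\notin\D^*$ and $j^*=i$.

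In case (i), the additivity $V_d(\bomega,\D^*)=V_d(\bomega,\D^*\setminus\{i\})+V_d^i(\omega_i)$ lets me peel off an ``$i$-piece'' $(L-k)V_d^i(\omega_i)$ that is a feasible value for $V(\omega_i,i,k)$ (``decide immediately''), leaving an ``$\A'$-piece'' that is a feasible $(\D,j^*)$-value for $V(\bomega,\A',k)$. In case (ii), the belief update induced by sensing $j^*\neq i$ fixes the $i$-th coordinate, so applying the induction hypothesis at $k+1$ inside the integral in $V_t^{j^*}$ factors the bound cleanly into $V_t^{j^*}(\bomega,\A'\setminus\D^*,k)+V(\omega_i,i,k+1)$; the inequality $V(\omega_i,i,k+1)\leq V(\omega_i,i,k)$ then closes this case. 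That monotonicity-in-time of $V$ I would establish as a separate short backward induction, using $V_d\geq 0$ together with the fact that the sensing costs cancel in the difference $V_t^j(\cdot,k)-V_t^j(\cdot,k+1)$, so that the inductive hypothesis $V(\bPi,\A,k+1)\geq V(\bPi,\A,k+2)$ propagates through.

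Case (iii) is the one I expect to be most delicate. Plugging the induction hypothesis into $V(\bPi(\bomega,o,i),\A''+i,k+1)$ (with $\A''\coloneq\A'\setminus\D^*$) inside $V_t^i(\bomega,\A''+i,k)$, and using that $\bPi(\bomega,o,i)$ leaves $\omega_j$ fixed for $j\in\A''$, yields $V(\bomega,\A'+i,k)\leq(L-k)V_d(\bomega,\D^*)+V(\bomega,\A'',k+1)+V_t^i(\omega_i,i,k)$. The obvious attempt to absorb $(L-k)V_d(\bomega,\D^*)+V(\bomega,\A'',k+1)$ into $V(\bomega,\A',k)$ via a decide-$\D^*$-then-sense-some-$m\in\A''$ strategy falls short by one sensing cost $c$, since Jensen's inequality only yields $V_t^m(\bomega,\A'',k)\geq V(\bomega,\A'',k+1)-c$. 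To close this $c$-gap I would first prove a \emph{decomposition} inequality $V(\bomega,\A,k)\geq(L-k)V_d(\bomega,\D)+V(\bomega,\A\setminus\D,k)$ for every $\D\subseteq\A$ --- this follows at once from \eqref{eq:set_function_value_function} by restricting the outer max to decision sets of the form $\D\cup\D'$ with $\D'\subseteq\A\setminus\D$ and using $V_d$-additivity. Coupled with the time-monotonicity $V(\bomega,\A'',k)\geq V(\bomega,\A'',k+1)$, this upgrades the bound to $(L-k)V_d(\bomega,\D^*)+V(\bomega,\A'',k+1)\leq V(\bomega,\A',k)$ with no residual $c$. Combined with the trivial $V_t^i(\omega_i,i,k)\leq V(\omega_i,i,k)$, case (iii), and hence the induction, closes.
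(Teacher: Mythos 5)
Your proof is correct and takes essentially the same route as the paper's: backward induction on $k$ with the same three-way case split on the role of resource $i$ under the optimizer of the joint problem, using additivity of $V_d$, the fact that sensing a resource leaves the other beliefs unchanged so the induction hypothesis factors inside the integral, and $(L-k)V_d^i(\omega_i),\,V_t^i(\omega_i,i,k)\leq V(\omega_i,i,k)$. The only differences are presentational: you bound $V(\bomega,\A'+i,k)$ from above rather than the left-hand side from below, and you explicitly state and sketch proofs of the decomposition inequality $V(\bomega,\A,k)\geq (L-k)V_d(\bomega,\D)+V(\bomega,\A\setminus\D,k)$ and the time-monotonicity $V(\bomega,\A,k)\geq V(\bomega,\A,k+1)$, both of which the paper's proof invokes implicitly (in its step $(a)$ and within its case 2, respectively).
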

\begin{proof}
Let us consider the function $J(\D)$ defined in \eqref{eq:set_function} refers to the value function when $\A_k=\A=\A'+i$ and $J'(\D)$ refers to the value function when $\A_k=\A'$.
We will again use induction. 
Clearly \eqref{eq:submodular_value_function} holds for $k=L-1$ where 
\begin{equation*}
\left[V_d(\bomega,\A')+V_d^i(\omega_i)\right]\geq V_d(\bomega,\A)
\end{equation*} 
Then, assuming \eqref{eq:submodular_value_function} holds for $k+1$, we prove it holds for $k$. 
Let us write  \eqref{eq:submodular_value_function} as
\be
\max_{\D\subseteq\A'}J'(\D)+V(\omega_i,i,k)\geq \max_{\D\subseteq\A}J(\D) 
\ee
and call $\tilde{\D}=\argmax\limits_{\D\subseteq\A}J(\D)$.
There are three possible cases:
\begin{enumerate}
\item If $i\in\tilde{\D}$,
\begin{align*}
&\max_{\D\subseteq\A'}J'(\D)+V(\omega_i,i,k)\geq J'(\tilde{\D}-i)+(L-k)V_d^i(\omega_i)\\&=J(\tilde{\D})=\max_{\D\subseteq\A}J(\D) 
\end{align*}
\item If $i\notin\tilde{\D}$ and $i=\argmax\limits_{j\in(\A)\setminus\tilde{\D}}V_t^j(\bomega,\A\setminus\tilde{\D},k)$
\begin{align*}
&\max_{\D\subseteq\A'}J'(\D)+V(\omega_i,i,k)\\
&\overset{(a)}{\geq} (L-k)V_d(\bomega,\tilde{\D})+V(\bomega,\A'\setminus\tilde{\D},k)+V_t^i(\omega_i,i,k)\\
&\geq (L-k)V_d(\bomega,\tilde{\D})+V(\bomega,\A'\setminus\tilde{\D},k+1)-c\\&+\mathop{{\int}}\limits_{\mathbf{O}}V\left(\omega_i^o,i,k+1\right)dF_i^o\overset{(b)}{\geq}(L-k)V_d(\bomega,\tilde{\D})-c\\&+\mathop{\mathlarger{\int}}\limits_{\mathbf{O}}V\left(\bPi(\bomega,o,i),\A\setminus\tilde{\D},k+1\right)dF_i^o\\
&=(L-k)V_d(\bomega,\tilde{\D})+V_t^i(\bomega,\A\setminus\tilde{\D},k)=\max_{\D\subseteq\A}J(\D) 
\end{align*}
where $(a)$ follows from the definition of our value function in \eqref{eq:value_function_exp} and $(b)$ holds by the induction hypothesis.
\item If $i\notin\tilde{\D}$ and $i\neq j^*=\argmax\limits_{j\in\A\setminus\tilde{\D}}V_t^j(\bomega,\A\setminus\tilde{\D},k)$
%\begin{align*}
%&\max_{\D\subseteq\A'}J'(\D)+V(\omega_i,i,k)\\
%&\geq (L-k)V_d(\bomega,\tilde{\D})+V_t^{j^*}\hspace{-0.1cm}(\bomega,\A'\setminus\tilde{\D},k)+V(\omega_i,i,k+1)\\
%&=(L-k)V_d(\bomega,\tilde{\D})+V(\omega_i,i,k+1)-c\\&~~+\mathop{\mathlarger{\int}}\limits_{\mathbf{O}}V\left(\bPi(\bomega,o,j^*),\A'\setminus\tilde{\D},k+1\right)f_{1-\omega_{j^*}}^{j^*}(o) do\\
%&\overset{(a)}{\geq}(L-k)V_d(\bomega,\tilde{\D})-c\\&~~+\mathop{\mathlarger{\int}}\limits_{\mathbf{O}}V\left(\bPi(\bomega,o,j^*),\A\setminus\tilde{\D},k+1\right)f_{1-\omega_{j^*}}^{j^*}(o) do\\
%&=(L-k)V_d(\bomega,\tilde{\D})+V_t^{j^*}(\bomega,\A\setminus\tilde{\D},k)\\&=J(\tilde{\D})=\max_{\D\subseteq\A}J(\D) 
%\end{align*}
%where $(a)$ holds by the induction hypothesis and this completes the proof.
the proof is similar to the previous case.
\end{enumerate}
\end{proof}
Proving \eqref{eq:threshold_inequality_1}-\eqref{eq:threshold_inequality_0} is equivalent to prove:
\begin{align*}
\forall \omega_i\in[0,1],~&(L-k)V_d^i(\omega_i)\geq V_t^i(\omega_i,i,k)\\
\Rightarrow &\max\limits_{\{i\}\subseteq\D\subseteq\A}J(\D)\geq\max_{\D\subseteq\A'}J(\D).
\end{align*}
To prove this, we call $\D^*=\argmax\limits_{\D\subseteq\A'}J'(\D)$ and use the chain of inequalities
\begin{align*}
&\max\limits_{\{i\}\subseteq\D\subseteq\A}J(\D)\geq J(\D^*+i)=(L-k)V_d^i(\omega_i)\\&~~+(L-k)V_d(\bomega,\D^*)+\hspace{-0.2cm}\max\limits_{j\in\A'\setminus\D^*}\hspace{-0.2cm}V_t^j(\bomega,\A'\setminus\D^*,k)\\
&\overset{(a)}{=} V(\omega_i,i,k)+V(\bomega,\A',k)\overset{(b)}\geq V(\bomega,\A,k)\overset{(c)}{\geq}\max_{\D\subseteq\A'}J(\D)
\end{align*}
where $(a)$ holds by hypotheses, since if $(L-k)V_d^i(\omega_i)\geq V_t^i(\omega_i,i,k)$ then $V(\omega_i,i,k)=(L-k)V_d^i(\omega_i)$ and definition of $J'(\D)$, $(b)$ follows by Lemma \ref{lemma_submodular_value_function} and $(c)$ by definition of value function in \eqref{eq:value_function_exp}, and this completes the proof.
\vspace{-.3cm}
\subsection{Proof of Lemma \ref{lemma_threshold_approximation}}\label{app:threshold_approximation}
Since we are focusing on a specific resource $i$ next, for brevity we drop the index $i$ and define the following events:
\begin{align*}
&A_k\triangleq\{\omega[k]\geq\nu_0[k]\},~B_k\triangleq\{\omega[k]\leq\nu_1[k]\},\\&C_k\triangleq \overline{A_k\cup B_k},~\C_{k}^{\ell}=\bigcap\limits_{m=k}^{\ell}C_m
\end{align*}
%where $\Gamma(x)=[1-e^{-x}]u(x)$ is the CDF of the exponential distribution.
We then write the function $V_t(\omega,k)$ as follows (we use the short notation $P_0(A)=P(A|s=0)$ and same for $P_1(A)$):
\begin{align}
&V_t(\omega,k)=-c+\sum_{\ell=k}^{L-2}P\left(\C_{k+1}^{\ell}\right)\mathds{E}\left[-u[\ell-k]c+(L-\ell-1)\right.\nonumber\\&\left.\left[\omega[\ell]rP_0\left(A_{\ell+1}\right)\hspace{-0.05cm}-\hspace{-0.05cm}(1-\omega[\ell])\rho P_1\left(A_{\ell+1}\right)\right]\big|\C_{k+1}^{\ell}\right]\label{eq:V_t_prob}
\end{align}
To find bounds for $V_t(\omega,k)$ we will use:
\begin{enumerate}
\item an upper/lower bound for $P\left(\C_{k+1}^{\ell}\right)$
\item an upper/lower bound for $P_0\left(A_{\ell+1}\big|\C_{k+1}^{\ell}\right)$
\item an upper/lower bound for  $P_1\left(A_{\ell+1}\big|\C_{k+1}^{\ell}\right)$
\end{enumerate} 
First we write:
\begin{align}
&P\left(\C_{k+1}^{\ell}\right)=P\left(C_{\ell} \big|\C_{k+1}^{\ell-1}\right)P\left(\C_{k+1}^{\ell-1}\right)\nonumber\\
&=\left[1-P\left(A_{\ell}\big| \C_{k+1}^{\ell-1}\right)-P\left(B_{\ell} \big| \C_{k+1}^{\ell-1}\right)\right]P\left(\C_{k+1}^{\ell-1}\right)\label{eq:recursion}
\end{align}
We now find 2). 3) follows a similar approach and 1) will be otained using 2) and 3) (in our derivation with $P(\cdot|\varphi)$ we indicate we are conditioning on $\omega[\ell]=\varphi$).
\begin{align*}
&P_0\left(A_{\ell+1}\big| \C_{k+1}^{\ell}\right)=\int\limits_{C_{\ell}} P_0\left(A_{\ell+1}|t\right)\dfrac{P_0\left(t\big|\C_{k+1}^{\ell-1}\right)}{P_0\left(C_{\ell}\big|\C_{k+1}^{\ell-1}\right)}dt\\
&\overset{u.b.}{\leq}\hspace{-0.2cm}\max_{\nu_1[\ell]<t<\nu_0[\ell]}\hspace{-0.3cm}P_0\left(A_{\ell+1}|t\right)=P_0\left(A_{\ell+1}\big|\nu_0[\ell]\right)\leq P_0\left(A_{\ell+1}\big|\overline{\nu_{0}}[\ell]\right)\\&~~~=\bar{F}_{\omega[\ell+1]}^1\left({\nu_{0}}[\ell+1]|\overline{\nu_{0}}[\ell],0\right)\leq \bar{F}_{\omega[\ell+1]}^1\left(\underline{\nu_{0}}[\ell+1]|\overline{\nu_{0}}[\ell],0\right)\\
&\overset{l.b.}{\geq}\hspace{-0.2cm}\min_{\nu_1[\ell]<t<\nu_0[\ell]}\hspace{-0.3cm}P_0\left(A_{\ell+1}|t\right)=P_0\left(A_{\ell+1}|\nu_1[\ell]\right)\geq P_0\left(A_{\ell+1}|\underline{\nu_{1}}[\ell]\right)\\
&~~~~=\bar{F}^1_{\omega[\ell+1]}\left({\nu_{0}}[\ell+1]|\underline{\nu_{1}}[\ell],0\right)\geq\bar{F}^1_{\omega[\ell+1]}\left(\overline{\nu_{0}}[\ell+1]|\underline{\nu_{1}}[\ell],0\right)
\end{align*}
where the upper and lower bound start by taking out of the integral the maximum and minimum value of \\ $P_0(A_{\ell+1}|\omega[\ell]=t)$ and notice that 
\be
\int\limits_{C_{\ell}}P_0\left(\omega[\ell]=t\big|\C_{k+1}^{\ell-1}\right)dt=P_0\left(C_{\ell}\big|\C_{k+1}^{\ell-1}\right)
\ee  
Following similar steps for the event $B_{\ell+1}$ we can derive:
\begin{align}
P_0\left(A_{\ell+1}\big|\C_{k+1}^{\ell}\right)
&\overset{u.b.}{\leq}\bar{F}_{\omega[\ell+1]}^1\left(\underline{\nu_{0}}[\ell+1]|\overline{\nu_{0}}[\ell],0\right)\label{eq:ub_A_ell}\\
P_0\left(A_{\ell+1}\big|\C_{k+1}^{\ell}\right)&\overset{l.b.}{\geq}\bar{F}^1_{\omega[\ell+1]}\left(\overline{\nu_{0}}[\ell+1]|\underline{\nu_{1}}[\ell],0\right)\label{eq:lb_A_ell}\\
P_0\left(B_{\ell+1}\big|\C_{k+1}^{\ell}\right)&\overset{u.b.}{\leq}1-\bar{F}_{\omega[\ell+1]}^1\left(\overline{\nu_1}[\ell+1]|\underline{\nu_1}[\ell],0\right)\label{eq:ub_B_ell}\\
P_0\left(B_{\ell+1}\big|\C_{k+1}^{\ell}\right)&\overset{l.b.}{\geq}1-\bar{F}_{\omega[\ell+1]}^1\left(\underline{\nu_1}[\ell+1]|\overline{\nu_0}[\ell],0\right)\label{eq:lb_B_ell}
\end{align}
Exactly the same bounds can be found for the case $s=1$, replacing the index of $P$ and the conditioned state on $F$ from $0$ to $1$.
For the case of $\ell=k$, which corresponds to the probabilities in the first term of the sum in \eqref{eq:V_t_prob}, we can use the bounds in \eqref{eq:ub_A_ell}-\eqref{eq:lb_A_ell} by considering 
$\overline{\nu_{0}}[k]=\underline{\nu_1}[k]=\omega$.
We can then use the recursion in \eqref{eq:recursion} and the bounds previously found to derive:
\begin{align}
P\left(\C_{k+1}^{\ell}\right)&\overset{u.b.}{\leq}\omega\hspace{-0.2cm}\prod\limits_{m=k+1}^{\ell}\overline{\mu}[m|0]+(1-\omega)\hspace{-0.2cm}\prod\limits_{m=k+1}^{\ell}\overline{\mu}[m|1]\\
P\left(\C_{k+1}^{\ell}\right)&\overset{l.b.}{\geq}\omega\hspace{-0.2cm}\prod\limits_{m=k+1}^{\ell}\underline{\mu}[m|0]+(1-\omega)\hspace{-0.2cm}\prod\limits_{m=k+1}^{\ell}\underline{\mu}[m|1]
\end{align}
Once we upper and lower bound the probabilities for each term of the sum in \eqref{eq:V_t_prob}, then the argument of the expectation is a monotonically increasing function of $\omega[\ell]$. Therefore, since we are conditioning on the event $\C_{k+1}^{\ell}$, the expectation will be lower or upper-bounded by choosing $\omega[\ell]=\underline{\nu_1}[\ell]$ or $\omega[\ell]=\overline{\nu_0}[\ell]$ respectively. 
Gathering these results and considering the convexity of $V_t^i(\omega_i,i,k)$, we can write the upper and lower bound $\overline{V_t}(\omega,k),\underline{V_t}(\omega,k)$ as in \eqref{eq:upper_bound_V_t}-\eqref{eq:lower_bound_V_t} where we introduce the $\max$ to make sure our functions remain above $c$.
%\subsection{Proof of Lemma \ref{lemma_regret_asymptotic}}\label{app:regret_asymptotic}
%By setting $\ell=\log{L}$, which is possible asymptotically since for finite $N$ we can choose $L$ such that $L>N\log{L}$, we can write from \eqref{eq:U_stat}
%\begin{align}
%&\dfrac{U_{\text{genie}}(L)-U_{\text{stat}}(L)}{L}=\nonumber\\&\dfrac{\log{L}}{L}\left(cN+\sum_{i\in\N}i(1-\alpha^i_{\log{L}})\omega_ir_i-i\beta^i_{\log{L}}(1-\omega_i)\rho_i\right)\nonumber\\&+\sum_{i\in\N}\alpha_{\log{L}}^i\omega_ir_i+\beta_{\log{L}}^i(1-\omega_i)\rho_i\label{eq:regret_L}
%\end{align}
%where we shifted the time index of $\alpha_i^{\log L},\beta_i^{\log L}, \omega_i$ from $(i-1)\ell$ to $0$ and then dropped it.
%Now, taking the limit for $L\rightarrow\infty$ in \eqref{eq:regret_L}, it is easy to see the first term in the sum goes to $0$ while for the second term it is enough to show that for $L\rightarrow\infty$, $\log{L}\rightarrow\infty$ and therefore, for any prior $\bomega[0]$ and any threshold of our likelihood-ratio tests we will have  $\alpha^{\log{L}}_i,\beta^{\log{L}}_i\rightarrow 0$.
%Having proved that the regret per slot for the static strategy goes asymptotically to $0$ implies the same conclusion for the optimal strategy since $U_{\text{stat}}(L)\leq U_{\text{opt}}(L)$ and this proves the Lemma.
\vspace{-.3cm}
\subsection{Proof of Lemma \ref{lemma_bound_expected_sensing_time}}\label{app:proof_bound_expected_sensing_time}
Without loss of generality we can prove the lemma at $k=0$ and for general $k$ simply consider the time translation and the equivalent problem for $k'=0,1,\dots,L-k-1$.
%Let $\iota_i[k']$ indicate the number of tests on resource $i$ from time $k$ to $k'$. Clearly $\iota_i[k']$ depends on the selection rule $\bphi$ but our approximation is general and irrespective of $\bphi$ as we will clarify in the remainder of the proof.
For the remainder of the proof the index $k$ is used as the time index for the test of resource $i$ that starts at time $0$.
We indicate with $\Lambda^{k}_i\triangleq\sum_{\ell=1}^{k}\log\left(\frac{f_1^i(o[\ell-1])}{f_0^i(o[\ell-1])}\right)$ the log-likelihood of the samples collected for resource $i$ up to time $k$.
By considering the belief update in \eqref{eq:belief_update} and the optimal policy structure in Theorem \ref{th:optimal_policy_structure}, a final decision on resource $i$ can be made as soon as
\begin{equation}\label{eq:stop_likelihood}
\Lambda_i^{k} \geq \Upsilon_1^i\left(\A_k,k\right)~ \vee~ \Lambda_i^{k} \leq \Upsilon_0^i\left(\A_k,k\right)
\end{equation}
where $\Upsilon^i_{0,1}(\A_k,k)\triangleq\varsigma(\omega_i[0],\nu_{0,1}^i(\A_k,k))$, and we recall the resources in the set $\A_k$ are the resources for which a decision is still pending at time $k$. 
From \eqref{eq:stop_likelihood} it follows that $\tau_i$ 
\begin{align} 
&\tau_i\triangleq \inf\left\{0\leq k\leq L-1\hspace{-0.05cm}:\Lambda_i^{k}\hspace{-0.05cm}\geq\hspace{-0.05cm}\Upsilon_1^i\left(\A_k,k\right)\hspace{-0.05cm}\vee\hspace{-0.05cm}\Lambda_i^{k}\hspace{-0.05cm}\leq\hspace{-0.05cm}\Upsilon_0^i\left(\A_k,k\right)\right\}
\end{align} 
is a stopping time.
From \eqref{eq:threshold_inequality_nu_1}-\eqref{eq:threshold_inequality_nu_0} with $k'=0$ we can deduce $\forall k=0,1,\dots,L-1,\forall \A_k\subseteq\N$
\begin{align}
\Upsilon^i_{0}(\A_k,k)&\geq\varsigma(\omega_i[0],\overline{\nu_0^i}[0])=\underline{\Upsilon^i_0}\label{eq:Upsilon_bound_0}\\
\Upsilon^i_{1}(\A_k,k)&\leq\varsigma(\omega_i[0],\underline{\nu_1^i[0]})=\overline{\Upsilon^i_1}\label{eq:Upsilon_bound_1}
\end{align}
Therefore 
\begin{align} 
&\overline{\tau}_i\triangleq \inf\left\{0\leq k\leq L-1:\Lambda_i^{k}\geq\overline{\Upsilon_1^i}\vee\Lambda_i^{k}\leq\underline{\Upsilon_0^i}\right\}
\end{align} 
is a stopping time always greater than $\tau_i$, i.e. $P(\overline{\tau}_i>\tau_i)=1$.
We then derive a bound on the expected value of $\overline{\tau}_i$ and this will also hold  for $\tau_i$.
We will use a similar technique as the one used for the thresholds approximation (see Proof of Lemma \ref{lemma_threshold_approximation} in Appendix \ref{app:threshold_approximation}) to write
\begin{align}
\mathds{E}_1\left[{\Lambda_i^{\overline{\tau}_i}}\right]&=E_1\left[\Lambda_i^{\overline{\tau}_i-1}+\Lambda_i^1\right]\leq \overline{\Upsilon_1^i}+\mathds{E}_1[\Lambda_i^1|\Lambda_i^1>0]\\
\mathds{E}_0\left[{\Lambda_i^{\overline{\tau}_i}}\right]&=E_0\left[\Lambda_i^{\overline{\tau}_i-1}+\Lambda_i^1\right]\geq \underline{\Upsilon_0^i}+\mathds{E}_0[\Lambda_i^1|\Lambda_i^1<0]
\end{align}
where the bounds follow by $\overline{\tau_i}$ being the stopping time, therefore we know $\underline{\Upsilon_0^i}<\Lambda_i^{\overline{\tau_i}-1}<\overline{\Upsilon_1^i}$ and $\Lambda_i^{\overline{\tau_i}}>\overline{\Upsilon_1^i}$ or $\Lambda_i^{\overline{\tau_i}}<\underline{\Upsilon_0^i}$.
By Wald's identity we have 
\be 
\mathds{E}\left[\Lambda_i^{\overline{\tau_i}}\right]=\mathds{E}[\overline{\tau_i}]\mathds{E}\left[\Lambda_i^1\right]
\ee 
and since $\mathds{E}_0\left[\Lambda_i^1\right]=-D(f_0^i||f_1^i)$, $\mathds{E}_1\left[\Lambda_i^1\right]=D(f_1^i||f_0^i)$
we can write 
\begin{align}
\mathds{E}_0\left[\overline{\tau_i}\right]&\leq\dfrac{-\underline{\Upsilon_0^i}-\mathds{E}_0[\Lambda_i^1|\Lambda_i^1<0]}{D(f_0^i||f_1^i)}\label{eq:bound_E_0}\\
\mathds{E}_1\left[\overline{\tau_i}\right]&\leq\dfrac{\overline{\Upsilon_1^i}+\mathds{E}_1[\Lambda_i^1|\Lambda_i^1>0]}{D(f_1^i||f_0^i)}\label{eq:bound_E_1}
\end{align}
and combining \eqref{eq:bound_E_0}-\eqref{eq:bound_E_1} (conditioned on the status of the resource) and evaluating the two bounds in \eqref{eq:Upsilon_bound_0}-\eqref{eq:Upsilon_bound_1} at any time $k\neq 0$ we can obtain the bound in \eqref{eq:E_tau_i}.
\vspace{-0.3cm} 
\subsection{Motivation for the index in our decision algorithm}
\label{app:index_motivation}
The reason to approximate the optimal selection rule with the index $\frac{\omega_i[k]r_i}{\mathds{E}[\tau_i-k|k]}$ finds its motivation in the asymptotic utility growth. 
Let us consider $L$ large enough such that we can neglect the probability of taking a wrong decision over a certain resource, and we can assume the sensing time for each resource is not affected by having spent time sensing other resources before. 
We also further limit our strategy to sequentially sense each resource until a decision is made and then switch to a different resource. 
Then there is an optimal sorting for this strategy which maximizes the expected utility. 
To find such sorting, we use an interchange argument. Let us consider a pair of arbitrary resources, say $1$ and $2$, and show the sorting $1,2$ at time $0$ with $L$ instants available is optimal if (we use the short notation $\mathds{E}[\tau_i]$ for $\mathds{E}[\tau_i-k|k]$ when $k=0$): 
\begin{align*}
&-c\mathds{E}[\tau_1]+(L-\mathds{E}_0[\tau_1])\omega_1r_1-c\mathds{E}[\tau_2-\tau_1|\tau_1]\\&+(L-\mathds{E}[\tau_1]-\mathds{E}_0[\tau_2-\tau_1|\tau_1])\omega_2r_2>\\&-c\mathds{E}[\tau_2]+(L-\mathds{E}_0[\tau_2])\omega_2r_2-c\mathds{E}[\tau_1-\tau_2|\tau_2]\\&+(L-\mathds{E}[\tau_2]-\mathds{E}_0[\tau_1-\tau_2|\tau_2])\omega_1r_1~~ \overset{(a)}{\Rightarrow}\\
&-c\mathds{E}[\tau_1]+(L-\mathds{E}_0[\tau_1])\omega_1r_1-c\mathds{E}[\tau_2]\\&+(L-\mathds{E}[\tau_1]-\mathds{E}_0[\tau_2])\omega_2r_2>-c\mathds{E}[\tau_2]+(L-\mathds{E}_0[\tau_2])\omega_2r_2\\&-c\mathds{E}[\tau_1]+(L-\mathds{E}[\tau_2]-\mathds{E}_0[\tau_1])\omega_1r_1~~\Leftrightarrow\\
&-\mathds{E}[\tau_1]\omega_2r_2>-\mathds{E}[\tau_2]\omega_1r_1~~\Leftrightarrow~~\dfrac{\omega_1r_1}{\mathds{E}[\tau_1]}>\dfrac{\omega_2r_2}{\mathds{E}[\tau_2]}
\end{align*}
where $(a)$ follows from the assumption the sensing time for the second resource is not affected by the time spent in sensing the previous one, which is a reasonable assumption for $L$ large enough.
However, due to the time variant threshold and the finite horizon scenario, the expected sensing time strongly depends on the time $k$ the test starts, and is therefore affected by the sorting. 
Nevertheless, this index represents a good approximation that takes into account the identifiability of the resource and simulation results will corroborate our intuition.
\vspace{-.8cm}
\subsection{Derivation of \eqref{eq:R_i_xi_P}}\label{app:derivation_R_i_xi_P}
The instantaneous $SNR$ at the primary receiver in presence of interference from the secondary transmitter can be expressed as
$\gamma_{i,\bar{\xi}_i}^P=\frac{\bar{\gamma}_i^P\left|h^P\right|^2}{1+\bar{\xi}_i\left|h^S\right|^2}$, 
where $h^P,h^S$ are the two complex channel gains and for Rayleigh fading their absolute value squared is exponential with unitary mean.
By following the same derivation as in \cite{outage_capacity_def} for the no interferer case, and assuming independence between $\left|h^P\right|^2$ and $\left|h^S\right|^2$ one can use the cdf of $Y=X_1-X_2$, where $X_1\sim Exp(\alpha_1)$, $X_2\sim Exp(\alpha_2)$: 
\be 
P\left(Y\leq y\right)=1-\dfrac{\alpha_1}{\alpha_1+\alpha_2}e^{-\frac{y}{\alpha_1}}~~~~~~~~y>0
\ee 
to derive \eqref{eq:deriv_P_i_bar_xi_i}, and consequently the effective rate for the primary receiver expressed by \eqref{eq:R_i_xi_P}.

\end{appendix}
\bibliographystyle{IEEEtran}
% argument is your BibTeX string definitions and bibliography database(s)
%\bibliography{IEEEabrv,../bib/paper}
%\bibliography{Opportunistic_Sensing_Bibliography}

%\begin{thebibliography}{10}
\vspace{-1.25cm}
\begin{IEEEbiography}
[{\includegraphics[width=1in,height=1.25in,clip,keepaspectratio]{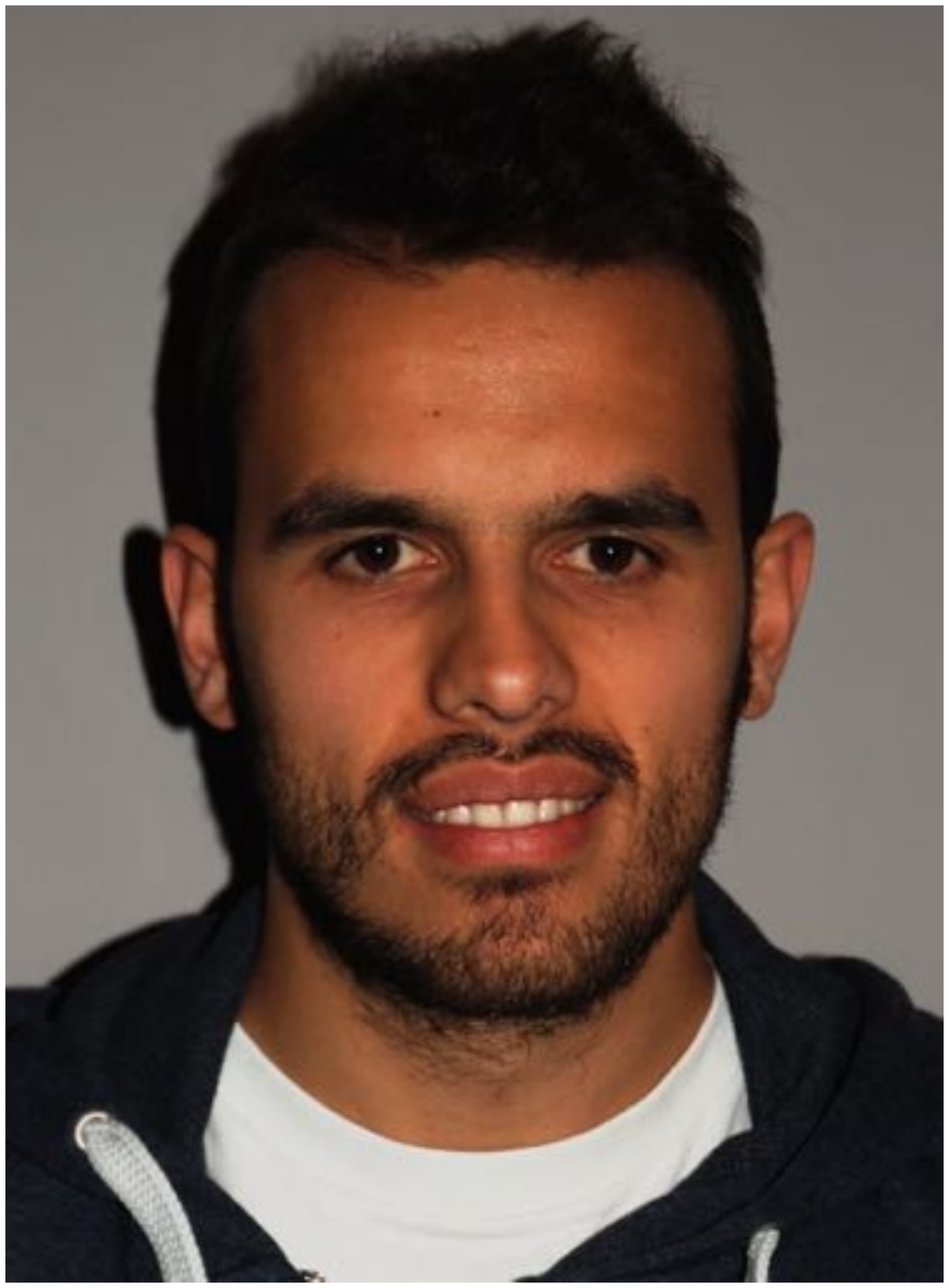}}]
{Lorenzo Ferrari} (S '14) is currently a PhD student in Electrical Engineering at Arizona State University.
Prior to that, he received his B.Sc. and M.Sc degree in Electrical Engineering from University of Modena, Italy in 2012 and 2014 respectively. His research interests lie in the broad area of wireless communications and signal processing. He has received the IEEE SmartGridComm 2014 Best Student Paper Award for the paper ``The Pulse Coupled Phasor Measurement Unit''.
\end{IEEEbiography}
\vspace{-1.25cm}
\begin{IEEEbiography}
[{\includegraphics[width=1in,height=1.25in,clip]{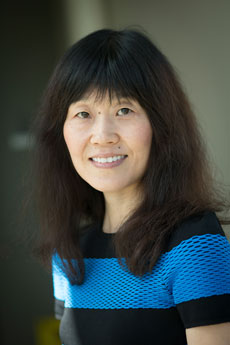}}]
{Qing  Zhao} (F '13) joined  the  School  of  Electrical  and  Computer  Engineering  at Cornell  University  in 2015 as  a  Professor.  Prior to that, she was  a  Professor  at  University  of  California,  Davis. She  received the Ph.D. degree in Electrical Engineering in 2001 from Cornell University. Her research interests are in
the general  area of stochastic optimization,  decision theory,  machine  learning, and algorithmic theory in
dynamic systems and communication and social-economic networks. She received the 2010 IEEE Signal
Processing  Magazine  Best  Paper  Award  and  the  2000  Young  Author  Best  Paper  Award  from  the  IEEE
Signal Processing Society.
\end{IEEEbiography}
\vspace{-1.25cm}
\begin{IEEEbiography}
[{\includegraphics[width=1in,height=1.25in,clip,keepaspectratio]{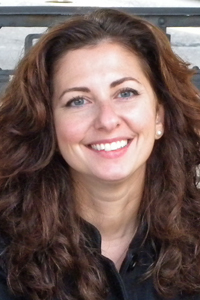}}]
{Anna Scaglione} (F '11) is currently a professor in
electrical and computer engineering at Arizona State
University. She was previously at the University of
California at Davis, Cornell University and University
of New Mexico. Her research focuses on various
applications of signal processing in network and data science
that include intelligent infrastructure for energy delivery and information
systems.
Dr. Scaglione was elected an IEEE fellow in
2011. She received the 2000 IEEE Signal Processing
Transactions Best Paper Award and more recently
was honored for the 2013, IEEE Donald G. Fink Prize Paper Award for the
best review paper in that year in the IEEE publications, her work with her
student earned 2013 IEEE Signal Processing Society Young Author Best Paper
Award (Lin Li). She was EIC of the IEEE Signal Processing Letters and
served in many other capacities the IEEE Signal Processing, IEEE
Communication societies and is currently Associate Editor for the 
IEEE Transactions on Control over Networked Systems.
\end{IEEEbiography}

%\bibitem{mishali2011wideband}
%M.~Mishali and Y.C. Eldar.
%\newblock Wideband spectrum sensing at sub-nyquist rates [applications corner]. \newblock {\em Signal Processing Magazine, IEEE}, 28(4):102--135, 2011.

%\bibitem{FRI_model}
%M.~Mishali and Y.C.~Eldar, ``From Theory to Practice: Sub-Nyquist Sampling of Sparse Wideband Analog Signals,'' {\em IEEE J. Sel. Topics Signal Process.\/}, vol.4, no.2, pp.375--391, April 2010.

%\bibitem{CS_theory}
%M.A. Davenport, M.F. Duarte, YC Eldar, and G. Kutyniok, ``Introduction to compressed sensing,'' in Compressed Sensing: Theory and Applications, Cambridge University Press, 2012.

%\end{thebibliography}
\end{document}